\def\aignore{\ateam_\text{rmv}}
\def\djset{\mf{G}}
\def\dteam{\mf{D}}
\def\dboundary{\mf{D}_\text{pair}}
\def\ateam{\mf{A}}
\def\subteam{\mc{S}}   % sub-team
\def\awinrc{\mc{R}_\text{LG}} % sub-game region
\def\awinr{\mc{R}_A} % intruder winning region
\def\awinri{\mc{R}_I} % intruder winning region
\def\awinrc{\mc{R}_C} % intruder winning region
\def\dwinr{\mc{R}_D} % intruder winning region
\def\target{\mc{T}}
\def\dwinrp{\mc R_\text{pair}}
\def\dspeed{\bar{v}_D}
\def\Qest{Q_\text{LG}}
\def\Qmm{Q_\text{MM}}
\def\Qint{Q_\text{low}}
\def\control{\mf{v}}
\def\controld{\omega}
\def\controlset{\bs{\Gamma}}
\def\curve{\bs{\gamma}}
\def\lgsi{\hat{q}}
\def\curve{\bs{\gamma}}
\def\xa{\mf x_A}
\def\di{D_i}
\def\dj{D_j}
\def\sdi{s_{\di}}
\def\implicit{implicit assignment}
\def\sequential{sequential paired defense}
\newcommand{\T}{^{\mbox{\tiny\sf{T}}}}
\theoremstyle{plain}
\newtheorem{corollary}{Corollary}
\newtheorem{theorem}{Theorem}
\newtheorem{lemma}{Lemma}
\newtheorem{definition}{Definition}
\newtheorem{property}{Property}
\newtheorem*{problem}{Problem}
\def\bs{\boldsymbol}
\def\mf{\mathbf}
\def\mc{\mathcal}
\def\beq{\begin{equation*}}
\def\eeq{\end{equation*}}
\def\bql{\begin{equation}}
\def\eql{\end{equation}}
\def\bqn{\begin{eqnarray*}}
\def\eqn{\end{eqnarray*}}
\def\bnl{\begin{eqnarray}}
\def\enl{\end{eqnarray}}
\def\bna{\bql\begin{array}{rcl}}
\def\ena{\end{array}\eql}
\def\bnn{\beq\begin{array}{rcl}}
\def\enn{\end{array}\eeq}
\def\bma{\begin{bmatrix}}
\def\ema{\end{bmatrix}}
\def\bmx{\begin{matrix}}
\def\emx{\end{matrix}}
\def\ben{\begin{enumerate}}
\def\een{\end{enumerate}}
\def\bit{\begin{itemize}}
\def\eit{\end{itemize}}
\def\bei{\begin{itemize}}
\def\eei{\end{itemize}}
\def\bet{\begin{tabular}}
\def\eet{\end{tabular}}
\newcommand{\allcaps}[1]{\uppercase\expandafter{#1}}
\def\bfs{\begin{footnotesize}}
\def\efs{\end{footnotesize}}
\def\bss{\begin{small}}
\def\ess{\end{small}}
\DeclareMathOperator*{\argmax}{arg\,max}
\def\Appendix{Appendix}
\def\RALversion{0}
\title{\LARGE \bf
Cooperative Team Strategies for Multi-player Perimeter-Defense~Games
}
\author{Daigo Shishika$^{*}$, James Paulos and Vijay Kumar% <-this % stops a space
% \thanks{$^{1}$Daigo Shishika and James Paulos are with the Department of Mechanical Engineering and Applied Mechanics,
%         University of Pennsylvania, 19104 PA Philadelphia, USA
%         {\tt\small daigo.shishika@gmail.com, jpaulos@seas.upenn.edu}}%
% \thanks{$^{2}$Vijay Kumar is Nemirovsky Family Dean of Penn Engineering with appointments in the Departments of Mechanical Engineering and Applied Mechanics, Computer and Information Science, and Electrical and Systems Engineering at the University of Pennsylvania, 19104 PA Philadelphia, USA
%         {\tt\small vijay.kumar@seas.upenn.edu}}%
\thanks{
We gratefully acknowledge the support of ARL grant ARL DCIST CRA W911NF-17-2-0181.
}
\thanks{The authors are with GRASP Laboratory at the University of Pennsylvania {\tt\small \{shishika,jpaulos,kumar\}@seas.upenn.edu}}
}
\begin{document}

\maketitle
\thispagestyle{empty}
\pagestyle{empty}

%%%%%%%%%%%%%%%%%%%%%%%%%%%%%%%%%%%%%%%%%%%%%%%%%%%%%%%%%%%%%%%%%%%%%%%%%%%%%%%%
\begin{abstract}
This paper studies a variant of the multi-player reach-avoid game played between intruders and defenders with applications to perimeter defense. The intruder team tries to score by sending as many intruders as possible to the target area, while the defender team tries to minimize this score by intercepting them. Finding the optimal strategies of the game is challenging due to the high dimensionality of the joint state space, and the existing works have proposed approximation methods to reduce the design of the defense strategy into assignment problems. However they suffer from either suboptimal defender performance or computational complexity. Based on a novel decomposition method, this paper proposes a scalable (polynomial-time) assignment algorithm that accommodates cooperative behaviors and outperforms the existing defense strategies. For a certain class of initial configurations, we derive the exact score by showing that the lower bound provided by the intruder team matches the upper bound provided by the defender team, which also proves the optimality of the team strategies.
\end{abstract}
% \old{In contrast to the existing works that only provide loose bounds on the intruder score, we derive the exact score by showing that the lower bound provided by the intruder team matches the upper bound provided by the defender team, which also proves the optimality of the team strategies.}

%%%%%%%%%%%%%%%%%%%%%%%%%%%%%%%%%%%%%%%%%%%%%%%%%%%%%%%%%%%%%%%%%%%%%%%%%%%%%%%%
\section{Introduction}

\noindent
Surveillance of perimeters and securing perimeters are important tasks in civilian and military defense applications.
It has become practical to deploy large number of autonomous agents to address these problems using multi-robot systems.
% Maintaining perimeter surveillance and security is a complex problem given that it has become practical to deploy autonomous agents in large numbers.
Approaches to counter intrusions by unmanned vehicles have been studied including detection and tracking mechanisms~\cite{Ganti2016},
% intrusion detection based on behavior rules~\cite{Mitchell2014},
patrolling (scheduling) scheme to visit points on the perimeter~\cite{Agmon2008icra},
%,Pasqualetti2012}.
and GPS spoofing to manipulate the behavior of the agents~\cite{Kerns2014}.

Scenarios with evasive targets who need to be detected, intercepted, or surrounded by other robots are often formulated as \emph{pursuit-evasion games}.
The two-player version of these games (one evader vs.\ one pursuer) may be formulated as a differential game and solved using the Hamilton-Jacobi-Isaacs equation \cite{Isaacs}.
However, the high dimensional state of multi-player games can make this approach intractable.
Multiple practical approaches to analyzing the multi-player game have been considered
including Voronoi tessellation \cite{Bakolas2010,Pierson2017}, %Huang2011b
cyclic pursuit \cite{Bopardikar2009}, and 
vehicle-routing formulation~\cite{Agharkar2014}.
% Voronoi tessellation and area minimization were used to design control strategies for multiple pursuers \cite{Bakolas2010,Huang2011b, Pierson2017}; %Huang2011b,  
% a chain of pursuers that perform so called cyclic pursuit was used to encircle the target \cite{Kim2007a, Bopardikar2009}; and 
% interception of multiple targets with straight-line trajectories was formulated as the vehicle-routing problem~\cite{Agharkar2014}.  %Smith2009
% \todo{Update refs.}

When a pursuer robot is tasked to defend a certain area/target against the evader, the pursuit-evasion game becomes the \emph{target-guarding problem}, first introduced by Isaacs~\cite{Isaacs}.
In these games the evader/intruder tries to reach the target without being intercepted, and the pursuer/defender's goal is to either intercept the intruder or delay its intrusion indefinitely \cite{Rusnak2005,Garcia2015,Liang2019}.
In this paper we study the multi-player version of the target-guarding problem, also called \emph{reach-avoid games} \cite{Huang2011icra, Fisac2015a,Chen2017}.

In related work, the multi-player game has been approximated as the combination of two-player games \cite{Chen2017}.
All pairwise two-player games between agents can be solved using a differential-game formulation using numerical tools, even in the case of complex environment geometry.
The authors then propose a team defense strategy that assigns to each defender one feasible intruder to capture.
% This assignment is a combinatorial optimization (maximum-matching) which must be performed at every time step.
This assignment provides an upper-bound on the intruder's score, but the obtained defense policy is potentially suboptimal because it can not consider cooperative maneuvers for capture.

Our prior work extended the assignment strategy by incorporating a cooperative behavior absent from two-player games: a pair of defenders employ a ``pincer movement'' to pursue an intruder from both sides \cite{shishika2018cdc}.
Although the defense performance was improved, the policy required a solution to the maximum-independent-set problem which is NP-hard.
% This paper proposes a defense strategy that is free from explicit defender-intruder assignments.
% Instead of a ``man-to-man'' assignment-based defense, each defender is in charge of a subregion of the game space and the team plays a ``zone-defense.''
% In addition, optimality is obtained by incorporating a cooperative behavior absent from two-player games: a pair of defenders employ a ``pincer movement'' to pursue an intruder from both sides.
% The defense policy is decentralized in the sense that given observations of all agents, each defender can act -- even switching partners for cooperative pursuit -- without any centralized optimization or local negotiation.

% Our approach to both game analysis and policy synthesis leverage a game decomposition method introduced in our prior work \cite{shishika2018cdc}.
This paper extends the decomposition method first introduced in \cite{shishika2018cdc}, and proposes a polynomial-time algorithm that outperforms the existing defense policies.
% \old{We also show the optimality of the defense strategy. Specifically, we derive the condition under which the lower-bound provided by the intruder team matches with the upper-bound provided by the defender team.}
{We also discuss the optimality of the strategies by deriving a condition under which the lower-bound provided by the intruder team matches with the upper-bound provided by the defender team.}
To the best of our knowledge, we are the first to show such optimality (equilibrium strategies) in a pursuit-evasion game played between teams.

% Together with an optimal intrusion strategy, we show that the defense policy performs optimally under certain conditions.

The two main contributions of this paper are
(i) the polynomial-time defense strategy that outperforms the existing ones; and
(ii) the analysis that shows the optimality of the proposed strategy, {when the defenders have a numerical advantage and the initial configuration satisfies a certain condition}.
%: i.e., intruder and defender policies give a saddle-point equilibrium.
Our video (available at \url{https://youtu.be/6zUPkzh_iPU})
illustrates the complexity of the problem and the effectiveness of the strategy through multiple multi-robot simulations.

Section~\ref{sec:problem} formulates the problem. Section~\ref{sec:background} reviews existing results that we use to build our method.
Section~\ref{sec:decomposition} presents the decomposition method.
Section~\ref{sec:defense_policy} proposes the defense policy and analyzes its performance.
Section~\ref{sec:simulation} provides numerical results.
% The proofs of the lemmas and theorems are presented in the supplementary materials \cite{suppleRAL2018} to meet the page constraint.

%=========================================%
\section{Problem Statement  \label{sec:problem}}
\noindent
% This section formulates the reach-avoid game for defenders on a perimeter.
% \subsection{Problem formulation}
% \noindent
This section formulates the reach-avoid game for robots defending a perimeter.
The target $\mc{T}\subset \mathbb{R}^2$ is assumed to be a convex region on a plane, and its perimeter is given by an arc-length parameterized curve $\curve: [0,L)\rightarrow \partial \mc{T}$, where $L$ denotes the perimeter length.
We use $s\in[0,L)$ to denote the arc-length position on the curve.
%The tangent vector of the curve at $s$ is denoted by $\tangent(s) \triangleq \frac{\partial \curve(s)}{\partial s}$.

%For points/vectors in $\mathbb{R}^2$, we use the following notations for relative vectors and unit vectors: 
%\bqn
%\relvec{a}{b} \triangleq \mf x_b - \mf x_a, \text{\;and\;} \hat{\mf x} \triangleq \frac{\mf x}{\| \mf x \|}.
%\eqn
%The arc-length from point $a$ to $b$ on the curve computed counter-clockwise (ccw) is denoted as:
%\bqn
%\arclength{a}{b} \triangleq (s_b - s_a) \text{\;mod\;} L,
%\eqn
%for example, $\arclength{a}{b}+\arclength{b}{a} = L$.
%We denote the segment starting from $s_a$ and ending at $s_b$ in the ccw direction as:
%\bqn
%\segment{s_a}{s_b} \triangleq \{s_x\,|\,\arclength{a}{x}\leq\arclength{a}{b}\},
%\eqn
%where $s_b$ may be smaller than $s_a$ when the segment contains $s=0$.

A set of $N_D$ robot defenders $\dteam = \{D_i\}_{i = 1}^{N_D}$ are constrained to move on the perimeter.
We assume that the defender robots must stay on the perimeter of the enclosed compound.
% which is motivated by defending the perimeter of a building that the defender robots cannot move through.
The position of the $i$th defender is described by $s_{D_i}$ or $\mf x_{D_i} = \curve(s_{D_i})$.
The defender robot's control input is the signed speed; $\dot{s}_{D_i} = \omega_{D_i}$ with the constraint $|\controld_{D_i}|\leq \dspeed$.

A set of $N_A$ intruders $\ateam = \{A_j\}_{j = 1}^{N_A}$ have first-order dynamics in $\mathbb{R}^2$.
The control inputs are the velocities; $\dot{\mf{x}}_{A_i} = \control_{A_i}$ with the constraint $\| \control_{A_i} \| \leq \bar v_A$.
The ratio of the maximum speeds is described by
\bql
\nu = \frac{\bar v_A}{\bar v_D}  \in(0,1],
\eql
i.e., the defenders are at least as fast as the intruders.
Also, this paper focuses on the case $N_A<N_D$ for conciseness.
% \paragraph*{\textbf{Assumptions}}

% \begin{assumption}
% We make the following assumptions: 
% \old{(\textbf{A1}) The information structure is perfect state pattern; i.e., every player knows the positions of all agents.}
% \new{(\textbf{A1}) Every player knows the positions of all agents: i.e., perfect state-feedback game.}
% (\textbf{A2}) The defender is consumed/eliminated when it captures one intruder.\footnote{\edit{This assumption is reasonable if we imagine an intruder that loiters near the perimeter. One defender is effectively consumed by the intruder since it cannot go after other intruders without letting this intruder reach the perimeter.}}
% \old{(\textbf{A3}) There are more defenders than the intruders: $N_A<N_D$.}
% \end{assumption}

An intruder is captured by a defender if their distance becomes zero.
In a microscopic view, $A_i$ scores if it reaches the target ($\mf x_{A_i}\in \partial \target$) without being captured by the defenders.
% We use zero distance to define capture/interception. 
% but the extension to nonzero capture radius is straightforward \cite{shishika2019cdc}.
% The defenders move on the perimeter to either intercept the intruder or prevent it from scoring indefinitely.
In this work, we assume that the defender is consumed/eliminated when it captures one intruder.
This is a reasonable assumption because \cite{shishika2018cdc} shows a method for the intruders to ensure one defender can not make multiple captures. 
% \footnote{\cite{shishika2018cdc} shows how the intruders can ensure that the defenders do not make multiple captures, therefore this is a reasonable assumption.}
% \footnote{\edit{This assumption is reasonable if we imagine an intruder that loiters near the perimeter. One defender is effectively consumed by the intruder since it cannot go after other intruders without letting this intruder reach the perimeter.}}

We assume perfect state-feedback information structure: i.e., the state vector $\mf z = [s_{D_1}.., s_{D_{N_D}},\mf x_{A_1}\T.., \mf x_{A_{N_A}}\T]\T$ is known to every player.
The team strategies $\controlset_D$ and $\controlset_A$ are mappings from the current states $\mf z$ to the control actions $\bs\omega_D = [\omega_{D_1}..,\omega_{D_{N_D}}]$ and $\control_A= [\control_{A_1}.. ,\control_{A_{N_A}}]$ respectively.
% \old{In a macroscopic view, we describe the team strategies by $\controlset_A\in \mathbb{R}^{2\times N_A}$ and $\controlset_D \in \mathbb{R}^{N_D}$, and use $Q\in\mathbb{N}$ to denote the intruder score: the number of intruders that reach the perimeter.}
% \new{In a macroscopic view, consider state-feedback policies  }
% The threshold on the score that defines ``victory as a team'' is application specific, and so we proceed without specifying that number.

\begin{problem}
Given an initial configuration of the game, $\mf z_0$,
% = [s_{D_1}..s_{D_{N_D}},\mf x_{A_1}..\mf x_{A_{N_A}}]$, 
what are the optimal team strategies $\controlset_A^*$ and $\controlset_D^*$ for the differential game
\vspace{-5pt}
\bql
\min_{\controlset_D}\max_{\controlset_A} Q(\mf z_0;\controlset_D,\controlset_A),
\eql
and what is the outcome, $Q^*(\mf z_0)=Q(\mf z_0;\controlset_D^*,\controlset_A^*)$?
\end{problem}

% In a macroscopic view, victory as a team is defined by the overall points scored by the intruders, which is the main interest of this work -- how many intruders can reach the target?  We use $Q\in\mathbb{N}$ to denote the score.
% Given a set of initial conditions, we seek to compute bounds on the achievable score.  Obtaining a tight bound entails the question of the strategies -- what are the optimal strategies for both intruders and defenders?

\section{Background}\label{sec:background}
\noindent
We examine two limiting cases from our prior work: single defense against one intruder, and pair defense against one intruder \cite{shishika2018cdc,shishika2019shape}.
We also review existing assignment-based defense strategy for the multiplayer game \cite{Chen2017,shishika2018cdc}.
% This policy is used in the comparative study and its optimality gap motivates our investigation into improved cooperative play.

% The existing defender-team strategies utilizes the solution to the game played by one defender and one intruder (1v1 game).

\subsection{One v.s.~One Game}
\noindent 
Given the defender location $s_D$, the game space can be divided into intruder-winning region $\awinr(s_D)$ and the defender-winning region $\dwinr(s_D)$, and the surface that separates the two is called the \emph{barrier} \cite{shishika2019shape}.

If the game starts in a configuration $\xa \in \awinr(s_D)$, there is a point on the perimeter that the intruder can reach before the defender does.
% : i.e., it can stay inside $\awinr(s_D)$ until it reaches the perimeter.
% has a strategy to win the game: i.e., the intruder cannot enter $\awinr(s_D)$ -- therefore cannot reach the perimeter -- without being captured.
In the perimeter defense game, the barrier for the one vs.\ one (1v1) game is a closed curve that starts and ends at the defender position.
% The derivation of the barrier and the corresponding strategies are detailed in \cite{}.
% Since the defender moves on a one-dimensional space, we have the following property:
% Figure: 1v1
\begin{figure}[t]
\centering
\includegraphics[width=.48\textwidth]
{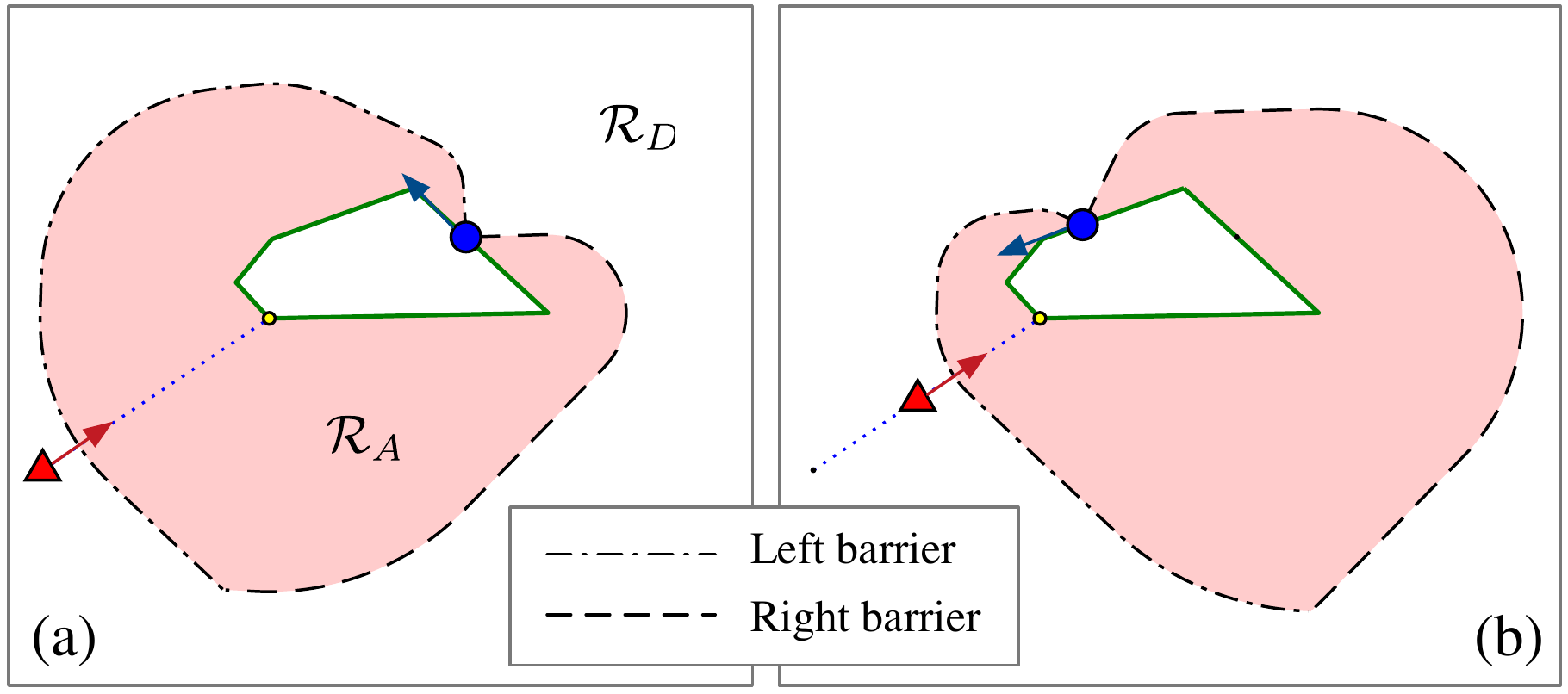}
\caption{
Barriers in 1v1 game for a polygonal perimeter with $\nu=0.7$.
The intruder winning region $\awinr(s_D)$ is colored in red.
(a) Initial configuration with $\xa\in\dwinr$.
(b) Both players moved optimally and the intruder stays in the defender-winning region.
}
\label{fig:1v1}
\vspace{-5pt}
\end{figure}
\begin{property}
\label{prop:1v1barrier}
The barrier consists of left and right barriers.
Starting from $\xa \in \dwinr$, the intruder cannot penetrate the left (resp.~right) barrier so long as the defender is moving counter-clockwise (ccw) (resp.~clockwise (cw)) \cite{shishika2018cdc,shishika2019shape}.
\end{property}
If the initial configuration is $\xa \in \dwinr(s_D)$, the defender can move to keep the intruder inside of $\dwinr(s_D)$.
This implies that the intruder cannot enter $\awinr(s_D)$ -- therefore cannot reach the perimeter -- without being captured.
The construction of the barrier and the feedback strategies are detailed in \cite{shishika2018cdc,shishika2019shape}, but only their existence is sufficient to complete the analysis in this paper.

% Eventually, the intruder will reach the singular surface extending from the defender.
% After this time, the defender can control its position so that the intruder is always on this surface.
% Since the only way for the intruder to reach the target is by passing through the defender position, this situation is effectively a capture.

\subsection{Two v.s.~One Game  \label{sec:two_vs_one}}
\noindent
% Without the loss of generality, let $\di$ (resp.~$\dj$) be the one on cw (resp.~ccw) side. 
% A naive extension of the two-player game will conclude that the intruder-winning region against a pair of defenders $\di$ and $\dj$ is $\awinr(\sdi) \cap \awinr(\sdj)$.
When we have two robot defenders, let $\di$ and $\dj$ be the defender on the cw and the ccw side.
For conciseness, we denote $\awinr(i)$ to mean $\awinr(\sdi)$.
A naive extension of the 1v1 game will conclude that the intruder-winning region against a pair of defenders is
\bql
\awinri(i,j) = \awinr(i) \cap \awinr(j).
\eql
% Recalling Prop.~1 we have the following:
% \begin{property}
% Starting from $\xa \notin \awinri(i,j)$, an intruder cannot enter $\awinri(i,j)$ if the defenders use a pincer maneuver: $[\omega_{\di},\omega_{\dj}]=[1,-1]$.
% \end{property}
However, the intrusion strategy and the winning region are now different because the intruder has to avoid both $\di$ and $\dj$ simultaneously.

% For the game played between two defenders and one intruder, we denote the intruder-winning region by $\awinrc(i,j)$.
The actual intruder-winning region in the two vs.\ one (2v1) game, which we denote by $\awinrc(i,j)$, is smaller than $\awinri(i,j)$.
We use the subscript $_I$ for \emph{independent} and $_C$ is for \emph{cooperative}. 
% actually smaller than $\awinri(i,j)$.
% The defender pair can team up to make the intruder-winning region smaller.
The boundary of $\awinrc(i,j)$ consists of a combination of the left barrier of $\di$, a circle with the center at the midpoint of the two defenders, and the right barrier of $\dj$ (see Fig.~\ref{fig:2v1barrier}).
% based on the fact that capturing the intruder in $\dwinrp$ is guaranteed only when the two defenders cooperate.

% Figure: 2v1
\begin{figure}[t]
\centering
\includegraphics[width=.48\textwidth]
{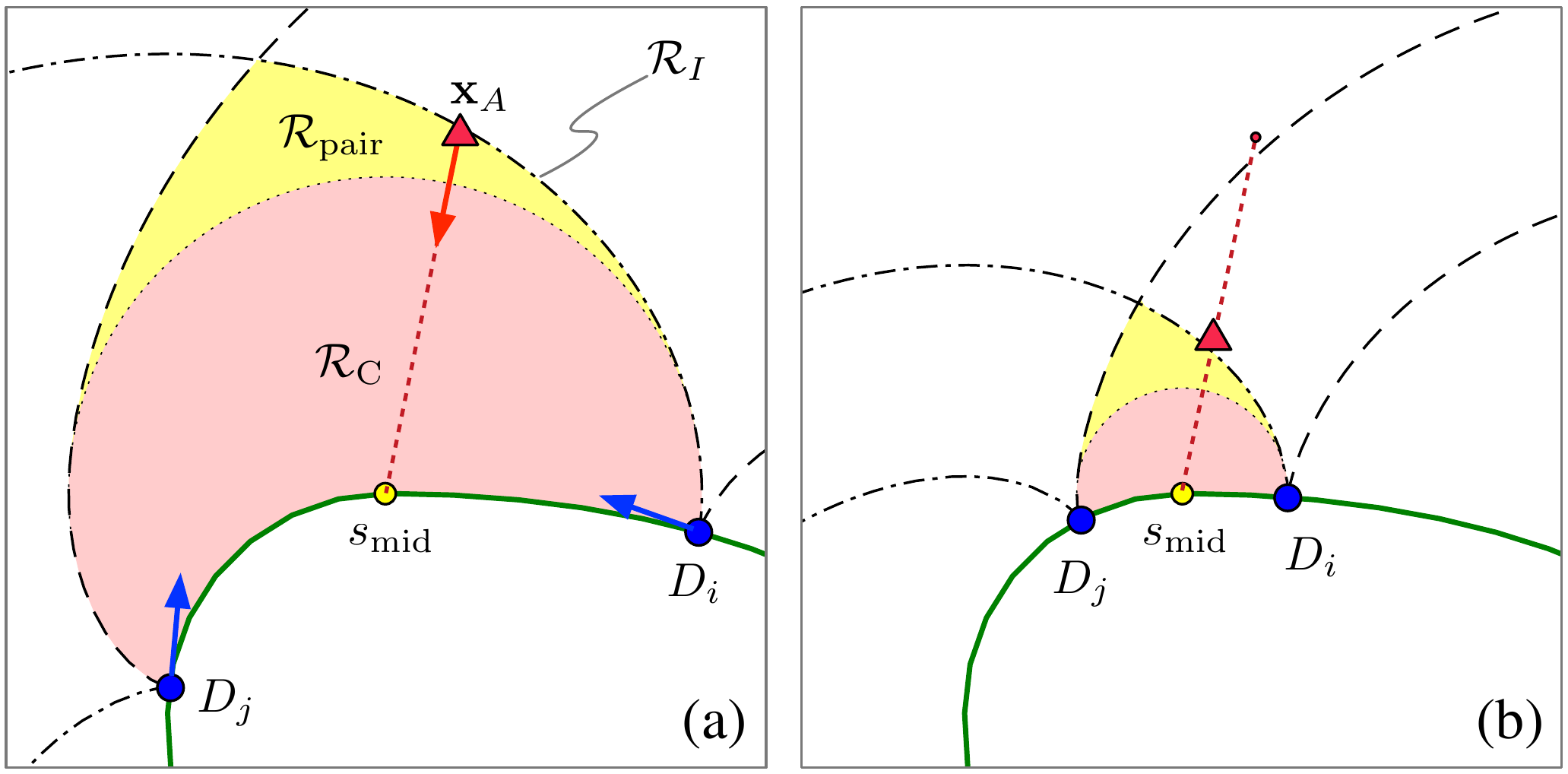}
\caption{
The intruder-winning region $\awinrc$ in the 2v1 game.
(a) The intruder starts in $\dwinrp$ and aims for the midpoint, while the defender robots perform the pincer maneuver.
(b) The intruder eventually enters $\dwinr(\di)$.
}
\label{fig:2v1barrier}
\vspace{-10pt}
\end{figure}
%

% If the intruder is close to one of the defenders, then the optimal play in the 2v1 game is identical to the optimal play in the 1v1 game against this defender.
% Otherwise, the optimal breaching point is the midpoint between the two defenders.
% Therefore, the barrier in the 2v1 game contains a part that is equidistant to the midpoint (i.e., a circle centered at the midpoint).
% The intruder winning region in this case is denoted by $\awinrc(i,j)$.
% The intruder has a strategy to stay inside $\awinrc(i,j)$ until it reaches the perimeter.

The difference with $\awinri(i,j)$ is given by the \emph{paired-defense region} defined as:
\bql
\dwinrp(i,j) \triangleq \awinri(i,j) - \awinrc(i,j). \label{eq:cooperative_defense_region}
\eql
% If the intruder starts in $\dwinrp(i,j)$ then the defender pair wins by a pincer maneuver: $[\omega_{\di},\omega_{\dj}]=[1,-1]$.
\begin{property}\label{prop:pincer}
If $\xa\in \dwinrp(i,j)$, and if the defenders use a pincer maneuver $[\omega_{\di},\omega_{\dj}]=[1,-1]$, then $\mf x_A \in \dwinr (i)$ or $\mf x_A \in \dwinr (j)$ occurs before the intruder reaches the perimeter
% \footnote{it can be seen from the fact that the defenders meet at the midpoint before the intruder reaches the perimeter.} 
\cite{shishika2018cdc,shishika2019shape}.
\end{property}
If the game starts in the configuration $\xa \in \awinrc$, then there exists a breaching point (the midpoint between the two defenders) that the intruder can reach before the two defenders. 

%The assignment based method presented in the previous section can be extended by augmenting the bipartite graph with additional nodes representing the pair of defenders.  To avoid conflicting assignments where a single defender is used in both individual and cooperative defense, an assignment method based on Maximum-Independent-Set (MIS) formulation was proposed in \cite{shishika2018cdc}.
%The assignment based method presented in the previous section was extended to accommodate the cooperative defense in \cite{shishika2018cdc}.
%In this paper, we propose a defense policy that does not rely on explicit assignments.

\subsection{Existing Assignment Strategies}
\noindent 
For a given initial configuration $\{ \mf x_{A_i} \}_{i=1}^{N_A}$ and $\{ \mf x_{D_j} \}_{j=1}^{N_D}$, the defender-winning regions can be used to determine a set of intruders that each defender can potentially win against.
% $D_j$ can be assigned to $A_i$ if $\mf x_{A_i}\in \dwinr(D_j)$.

\paragraph*{Maximum Matching \cite{Chen2017}}
Consider a bipartite graph with $\dteam$ and $\ateam$ as two sets of nodes.
We draw an edge between $\di$ and $A_j$ if capture can be guaranteed: $\mf x_{A_j}\in\dwinr(\di)$.
{Maximum-cardinality Matching (MM) refers to finding a largest set of edges such that there is at most one edge extending from each node.}
% to and the potential assignments defining the edges.
By performing MM on this graph, we assign at most one intruder to each defender.

% Matching in graph theory refers to finding a set of edges with no shared nodes.
% Here, this restriction corresponds to the assumption that $D_j$ can only play an optimal two-player game against at most one intruder at a time.
% Maximum-matching (MM) algorithms (see references in \cite{Chen2014a}) give such an edge set with maximum cardinality.
% The edge set is used to assign at most one unique defender to each intruder.
The cardinality of the matching, $N_\text{MM}^{cap}$, tells us that at least $N_\text{MM}^{cap}$ intruders will be captured.
%The assignment gives the guaranteed number of intruders to be captured, $N_\text{MM}^{cap}$.
The upper bound on the intruder score is then given by 
% \vspace{-0.08in}
\begin{gather*}
Q_\text{MM}= N_A - N_\text{MM}^{cap}.
\end{gather*}
The MM assignment can be found in polynomial time \cite{Chen2017}.
However, this method assumes that all defenders play independent games and ignores the possibility of cooperative 2v1 defenses.

\paragraph*{Maximum Independent Set}
Our prior work extended the above assignment method by incorporating the 2v1 games \cite{shishika2018cdc}.
% It was shown that a cooperative behavior where a pair of defenders employ a ``pincer maneuver'' to pursue an intruder from both sides increases the defender winning region .
The bipartite graph is augmented with additional nodes representing pairs of defenders.
To avoid conflicting assignments where a single defender is used in both 1v1 and 2v1 defense, an assignment method based on Maximum-Independent-Set (MIS) formulation was proposed.
The upper bound from this method is denoted by $Q_\text{MIS}$, and we have $Q_\text{MIS} \leq Q_\text{MM}$.
The downside of the MIS assignment strategy is its computational complexity (NP-hard), which motivates the derivation of a scalable defender team algorithm.

This paper proposes a polynomial-time algorithm with a score bound $\Qest$ that satisfies $\Qest \leq Q_\text{MIS}\leq Q_\text{MM}$.
In addition, while the above methods only provide upper bounds on the score, this paper discusses the optimality of the team strategy.
This is done by also considering the lower bound on the score guaranteed by the intruder team.

% \section{Winning Regions}
% This section visits the barrier based on 1v1 and 2v1 games and discusses their properties that are necessary and sufficient to derive the team strategies in the sequel.

\section{Local Game Decomposition \label{sec:decomposition}}
\noindent 
% When $N_D\geq 3$, there are multiple choices for the defender pair.
% If $\mf x_{A_1}\in\awinrc(i,j)$, then $A_1$ can win not only against $D_i$ and $D_j$, but also against all the defenders outside of $\awinrc(i,j)$.
% This implies that $A_1$ may be captured only if there is a defender in $\awinrc(i,j)$.
% These observations are formalized in this section.
We describe a method to partition the game space into subregions where ``local games'' between $n_D$ defenders and $n_A$ intruders are played.
This decomposition leads directly to an intruder team strategy and accompanying lower bound on the intruder's score.

\subsection{Local Game Region}
\noindent
% We first define the subregions and subteams that are used throughout the paper.
We call the intruder winning region $\awinrc(i,j)$ in the 2v1 game to be the \emph{local game region} (LGR), and $\awinri(i,j)$ to be the Independent-LGR (I-LGR).
% \begin{definition}
% Figure: lgr
\begin{figure}[t]
\centering
\includegraphics[width=.48\textwidth]
{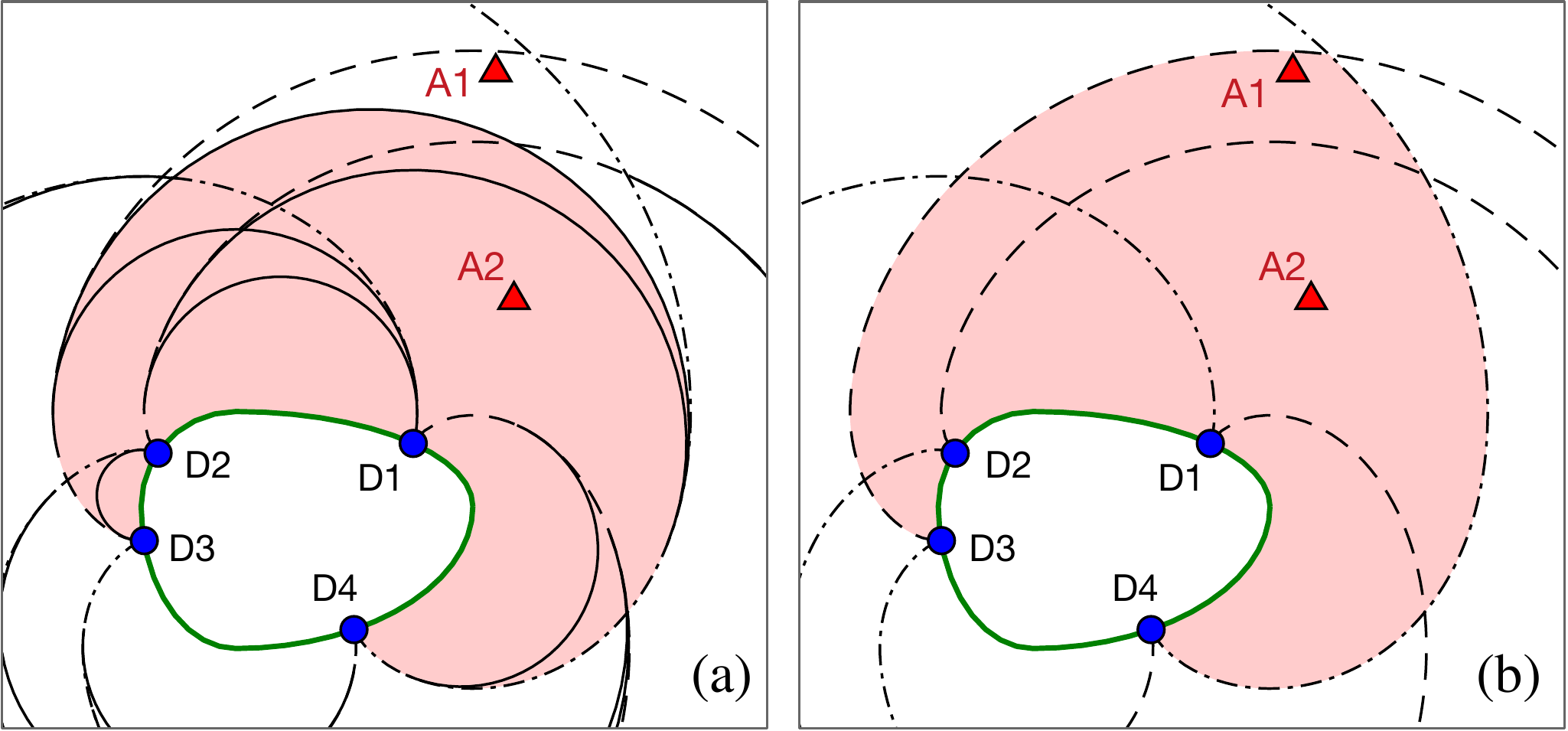}
\caption{
(a) Boundaries of LGRs.
The $7$th region $\awinrc^7$ with $\dboundary^7=(D_4,D_3)$ is highlighted.
The $7$th subteams are $\subteam_D^7=\{D_1,D_2\}$ and $\subteam_A^7=\{A_2\}$.
(b) Boundaries of I-LGRs.
The $7$th intruder subteam is  $\hat{\subteam}_A^7=\{A_1, A_2\}$.
}
\label{fig:lgrs}
\vspace{-5pt}
\end{figure}
These regions are generated by an ordered pair $\dboundary=(\di,\dj)$, and we refer to $D_i$ and $D_j$ as the \emph{right-} and \emph{left-boundary defender}.
% , and denote the ordered pair as $\dboundary=(D_i,D_j)$.
% \end{definition}
In the degenerate case $i=j$ we set $\awinri(i,i)=\awinrc(i,i)=\awinr(i)$.
The total number of LGRs (or I-LGRs) is $N_D^2$, and we use the superscripts $k\in\{1,...,N_D^2\}$ to enumerate them.
% : e.g., $\dboundary^k = (D_i,D_j)$ for $\awinrc^k = \awinrc(i,j) $.
We also use $\dboundary^k = (D_R^k, D_L^k)$ to denote the right- and left-boundary defenders of $k$th LGR (or I-LGR).
Note the following properties that are straightforward to see:
\begin{property}
The boundary of an I-LGR consists of the left barrier of $D_R^k$ and the right barrier of $D_L^k$.
\end{property}
\begin{property}
The intersection of two I-LGRs is an I-LGR.
\end{property}

% defender sub-team
% \begin{definition}
We define the $k$th defender \emph{sub-team}, $\subteam_D^k$, to be the subset of defenders that are on the ccw segment from $s_{D_R^k}$ to $s_{D_L^k}$  (not including the boundary defenders).
% \end{definition}
% We use $n_D^k\triangleq |\subteam_D^k|$ to denote the cardinality of the subteams.
% While LGRs and I-LGRs share the same defender subteam, the intruder subteams may different:
% \begin{definition}
The intruder subteam, $\subteam_A^k$ (resp.~$\hat{\subteam}_A^k$), is the set of intruders contained in $\awinrc^k$ (resp.~$\awinri^k$).
See Fig.~\ref{fig:lgrs} for an example.
% \end{definition}

We use $n_D^k\triangleq |\subteam_D^k|$, $n_A^k\triangleq |\subteam_A^k|$, and $\hat{n}_A^k\triangleq |\hat{\subteam}_A^k|$ to denote the cardinality of the subteams.
Since $\awinrc^k\subseteq\awinri^k$, we have the relation $n_A^k \leq \hat{n}_A^k$.
Also note that the difference $\Delta n_A^k \triangleq \hat{n}_A^k- n_A^k$ comes from the intruders inside $\dwinrp^k$.
We use the LGRs to construct the intruder team strategy and the lower bound on $Q$, whereas the I-LGRs will later be used in the defender team strategy.
% An LGR with larger $n_D$ contains some of the LGRs with smaller $n_D$: if $\subteam_D^k \subset \subteam_D^l$, then $\awinrc^k \subset \awinrc^l$. 
%\todo{clarify by talking about circular-arc graph}

\subsection{Local Intrusion Strategy \label{sec:local_intrusion_strategy}  }
\noindent 
By playing a 2v1 game against the boundary defenders,
% The significance of the LGRs is the fact that 
any intruder in $\awinrc^k$ has a strategy  to win against all defenders except for those in $\subteam_D^k$.  In other words, the intruders can play a local game that involves only $\subteam_D^k$ and no other defenders.
% Figure: subteams
\begin{figure}[t]
\centering
\includegraphics[width=.48\textwidth]
{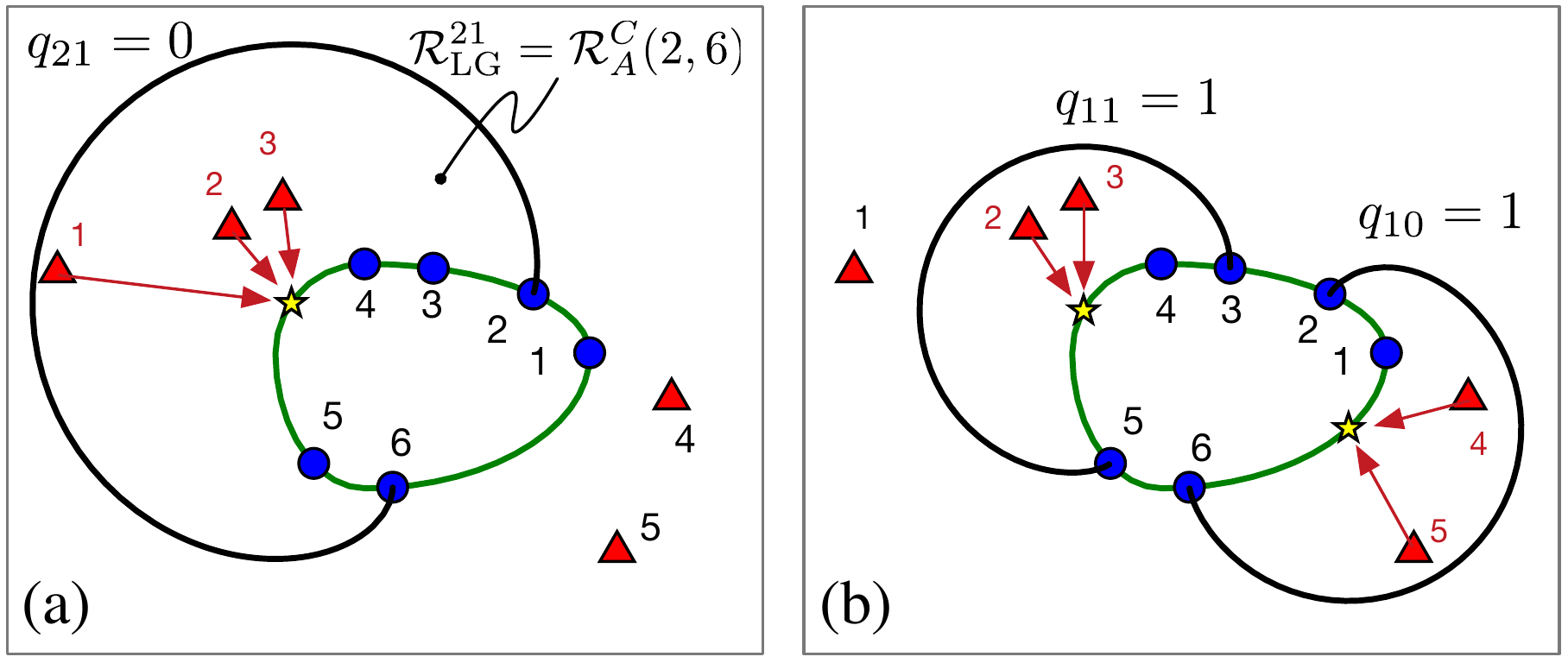}
\caption{
(a) Local intrusion strategy (arrows) for the subteam corresponding to the $21$st LGR.
% , i.e., $D_R^{21}=D_2$ and $D_L^{21}=D_6$.  $\subteam_D^{21} = \{D_3,D_4,D_5\}$, $\subteam_A^{21} = \{A_1,A_2,A_3\}$, and $q_{21}=3-3=0$.
% The arrows depict the intrusion strategy.
The score is $q_{21}=3-3=0$.
(b) Optimal partitioning $\djset^*=\{10,11\}$ that guarantees the score $\Qest=q_{10}+q_{11}=2$.
}
\label{fig:subteams}
\vspace{-10pt}
\end{figure}
%
% \paragraph*{Local-game strategy}
% Let $\midpoint^k$ denote the mid point between $\dboundary^k$.
% If $n_A^k>n_D^k$, the local-team strategy for $\subteam_A^k$ is to simultaneously arrive at $n_A^k$ distinct points on $\mc{T}$ near $\midpoint^k$, so that at most only $n_D^k$ intruders are captured.

% Local game score
\begin{lemma}[{Local game score}] \label{lem:lgs}
Define $q_k \triangleq n_A ^k- n_D^k$ to be the local game score (LGS).
If $\exists \;k$ such that $q_k>0$,  then the intruder team can score at least $q_k$ points.
\end{lemma}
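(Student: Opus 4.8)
The plan is to exhibit an explicit intruder team strategy achieving the claimed score, reasoning region-by-region. Fix a $k$ with $q_k = n_A^k - n_D^k > 0$. First I would restrict attention to the $n_A^k$ intruders in $\subteam_A^k$ (those contained in $\awinrc^k$) and the boundary-pair $\dboundary^k = (D_R^k, D_L^k)$. Each such intruder lies in $\awinrc(i,j)$ for the boundary defenders $i = R^k$, $j = L^k$; by the characterization of the LGR boundary (left barrier of $D_R^k$, the midpoint circle, right barrier of $D_L^k$) together with Property~\ref{prop:1v1barrier}, every intruder in $\awinrc^k$ can aim ``toward the midpoint'' in such a way that it stays inside $\awinr(D_R^k)$ while $D_R^k$ moves ccw and inside $\awinr(D_L^k)$ while $D_L^k$ moves cw. The key consequence I want to extract: along such a trajectory the intruder remains inside $\awinr$ of \emph{every defender outside the cw-to-ccw arc spanned by the boundary pair}, i.e.\ every defender not in $\subteam_D^k \cup \{D_R^k, D_L^k\}$. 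This is because those defenders sit on the far side of the two barriers, and the barrier-penetration property says the intruder cannot be captured by them as long as it respects the one-sided motion constraints, which the midpoint-aiming strategy does simultaneously for both barriers.

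The second step is to handle the two boundary defenders themselves. Here I would invoke the 2v1 structure directly: since each intruder in $\awinrc^k$ is in the \emph{intruder}-winning region of the 2v1 game against $(D_R^k, D_L^k)$, by definition of $\awinrc$ the intruder can reach the breaching point (the midpoint of the two defenders) before either boundary defender, regardless of what those two defenders do — in particular the boundary defenders cannot contribute a capture against an intruder that commits to this local game. So effectively the $n_A^k$ intruders in $\subteam_A^k$ have reduced the global game to a local game in which only the $n_D^k$ defenders of $\subteam_D^k$ are relevant adversaries, and no defender can capture more than one intruder (the consumption assumption). A counting argument then finishes it: $n_D^k$ defenders can eliminate at most $n_D^k$ intruders, so at least $n_A^k - n_D^k = q_k$ of these intruders reach the target. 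Summing is not needed — a single region already certifies $q_k$ points.

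The main obstacle I anticipate is making precise the claim that the $n_A^k$ intruders can \emph{simultaneously} execute their midpoint-aiming maneuvers without interfering, and that the boundary defenders, even while trying to help defenders in $\subteam_D^k$, cannot both serve as a barrier-blocker and later make a capture. Concretely: the one-sided barrier guarantees (Property~\ref{prop:1v1barrier}) are stated per-intruder against a single defender moving in a fixed rotational sense, and I need that a defender cannot ``switch sides'' profitably — i.e.\ if $D_R^k$ abandons its ccw role to chase an intruder it could only have been blocking one specific flank, and abandoning it merely opens the barrier for that intruder without creating a capture. I would argue this by contradiction against the 2v1 optimality: if the boundary defenders could do better than the pincer bound, the intruder would not have been placed in $\awinrc^k$ in the first place. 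A secondary subtlety is the degenerate case $i = j$ (single boundary defender), where $\awinrc(i,i) = \awinr(i)$ and the LGR reduces to a 1v1 intruder-winning region; there Property~\ref{prop:1v1barrier} alone suffices, and the argument specializes cleanly. I do not expect the counting step or the ``no multiple captures'' step to cause trouble, since both are given.
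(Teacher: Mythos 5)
Your proposal is correct and follows essentially the same route as the paper's own (very terse) proof: send all $n_A^k$ intruders in $\awinrc^k$ toward the midpoint of the boundary pair, note that defenders outside the LGR cannot reach them and that the boundary defenders cannot capture an intruder playing its 2v1 winning strategy, and then count the at most $n_D^k$ captures by $\subteam_D^k$. Your added discussion of the barrier mechanics and the degenerate $i=j$ case simply fills in details the paper leaves implicit.
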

\begin{proof}
If the $n_A^k$ intruders in $\awinrc^k$ all play the 2v1 game against (approach near the midpoint of) the boundary defenders 
% $(D_R^k,D_L^k)$,
$\dboundary^k$,
then the defenders outside of this LGR cannot reach those intruders.
This implies that at most $n_D^k$ intruders are captured, since we assume that each defender can capture at most one.
Hence, the remaining $q_k$ intruders score.
\end{proof}
Extending this idea to the entire team, the intruders can partition $\ateam$ into disjoint sub-teams and play separate local games.
The optimal partitioning is discussed next.

\subsection{Guaranteed Total Score }\label{sec:QLG}
\noindent
We consider a partitioning of the intruder team based on LGRs.
This approach is certainly not the only way to partition the team, but we later show in Sect.~\ref{sec:saddle_point} that it results in an optimal intruder performance.
Let $\djset \subset \{1,...,N_D^2\}$ denote a  set of disjoint LGRs: i.e., $\awinrc^k\cap\awinrc^l=\emptyset$ for $k,l \in \djset$.
The set $\djset$ gives us a way to partition the intruders into $|\djset|$ disjoint subteams.
% The disjointness constraint is motivated by the fact that the intruders cannot be double counted  \cite{shishika2018cdc}. 

Combining $\djset$ with the local intrusion strategy guarantees a lower bound $Q_\text{low}$ on the overall score:
\vspace{-0.03in}
\bql
\Qint(\djset) \triangleq \sum \limits_{k\in \djset} \max\{q_k,0\}.
\vspace{-0.03in}
\eql
An optimal $\djset^*$ is given by $\djset^*\triangleq \argmax_{\djset} \Qint (\djset)$, and it gives the \emph{guaranteed total score}:
\vspace{-0.03in}
\bql
\Qest \triangleq \Qint(\djset^*) =  \max \limits_{\djset} \left( \sum_{k\in \djset} \max\{q_k,0\} \right).
\label{eq:Qest}
\vspace{-0.03in}
\eql

% Although calculating the exact score \(\Qest\) and teaming \(\djset^*\) is nontrivial, it can be done in polynomial time \cite{shishika2018cdc}.
Values for $\Qest$ and $\djset^*$ can be obtained in $O(N_D^4)$ time by recognizing \eqref{eq:Qest} as an instance of the maximum weight independent set problem on a circular arc graph \cite{shishika2018cdc}.
% The calculation of \(\Qest\) and \(\djset^*\) can be done in $O(N_D^4)$ time using an algorithm for maximum weight independent set on circular arc graphs \cite{shishika2018cdc}.
For applications where it is critical to avoid any intrusion, it is easy to test whether the intruders can guarantee a score of at least one: \(\Qest > 0 \Leftrightarrow \exists~q_k > 0\).

We let $\controlset_A^*$ denote the intruder strategy corresponding to the teaming into $\djset^*$ and then approaching the midpoint between the boundary defenders of the associated LGR.
% The next theorem summarizes the results of this section:
\begin{theorem} \label{thm:Qlow}
% Let $\mf u_A^*$ denote the local intrusion strategy (Sect.~\ref{sec:local_intrusion_strategy}) with the optimal teaming $\djset^*$.
The intruder team strategy $\controlset_A^*$ guarantees that the intruders score at least $\Qest$ defined in \eqref{eq:Qest}, i.e.,
\vspace{-0.03in}
\bql
Q(\mf z_0; \controlset_D,\controlset_A^*)\geq  \Qest(\mf z_0), \; \forall \, \controlset_D\in U_D.\label{eq:uAguarantee}
\vspace{-0.03in}
\eql
where $U_D$ is the set of all permissible defender strategies.
\end{theorem}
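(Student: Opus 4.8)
The plan is to build the claimed guarantee by stitching together the per-LGR guarantees of Lemma~\ref{lem:lgs} across the disjoint collection $\djset^*$. First I would fix an arbitrary defender strategy $\controlset_D \in U_D$ and let the intruders play $\controlset_A^*$: each intruder belonging to a subteam $\subteam_A^k$ (for $k \in \djset^*$) heads toward the midpoint of the boundary defenders $\dboundary^k$ of its assigned LGR, as in Property~\ref{prop:pincer}, while intruders not assigned to any LGR in $\djset^*$ play arbitrarily (they contribute nothing to the lower bound). The key structural fact to establish is a \emph{non-interference} claim: for $k \in \djset^*$, the only defenders that can possibly capture an intruder of $\subteam_A^k$ are those in $\subteam_D^k \cup \{D_R^k, D_L^k\}$. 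Indeed, an intruder in $\awinrc^k$ is, by definition of the LGR, on the defender-winning side of the barrier of every defender outside the arc $[s_{D_R^k}, s_{D_L^k}]$, so by Property~\ref{prop:1v1barrier} such outside defenders cannot reach it; and the boundary defenders, if they do not commit to the pincer against $\subteam_A^k$, likewise cannot breach — this is exactly the content used in the proof of Lemma~\ref{lem:lgs}.

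Next I would account for captures. Because the LGRs indexed by $\djset^*$ are pairwise disjoint, their interior defender subteams $\subteam_D^k$ are pairwise disjoint as well (they consist of strictly-interior defenders on non-overlapping arcs), and each is disjoint from the boundary-defender pairs of the other regions. The subtle point is the boundary defenders themselves: a defender could serve as the boundary defender for two adjacent LGRs in $\djset^*$. I would handle this by invoking the elimination assumption — a defender is consumed after one capture — so even if $D_L^k = D_R^l$ for adjacent $k,l \in \djset^*$, that single defender accounts for at most one capture total; assigning it (say) to whichever region it actually intercepts in still leaves the count $\le n_D^k$ for at least one of the two, and a careful bookkeeping shows the total number of intruders captured among $\bigcup_{k \in \djset^*} \subteam_A^k$ is at most $\sum_{k \in \djset^*} n_D^k$. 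Hence the number scoring is at least $\sum_{k \in \djset^*} (n_A^k - n_D^k) \ge \sum_{k \in \djset^*} \max\{q_k, 0\} = \Qint(\djset^*) = \Qest$, where the middle inequality just drops the negative-weight regions (equivalently, one restricts $\djset^*$ to regions with $q_k > 0$, which does not decrease the sum). Since $\controlset_D$ was arbitrary, this yields \eqref{eq:uAguarantee}.

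The main obstacle I expect is making the boundary-defender bookkeeping airtight, since a boundary defender is shared between an LGR in $\djset^*$ and the ``outside,'' and potentially between two LGRs in $\djset^*$. The clean way around this is to observe that in the local intrusion strategy the intruders of $\subteam_A^k$ do not actually rely on the boundary defenders \emph{failing} to capture anyone — Lemma~\ref{lem:lgs} already bounds captures within the LGR by $n_D^k$, counting only the strictly-interior defenders, because the pincer behavior forces the intruder into $\dwinr(D_R^k)$ or $\dwinr(D_L^k)$ only if a boundary defender commits, and a committing boundary defender is then consumed and removed from play for every other region. So I would phrase the argument entirely in terms of the disjoint interior subteams plus the one-capture-per-defender budget, which sidesteps the sharing issue. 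A secondary, purely technical point is that the intruders reaching the midpoint and the defenders' feedback responses live in continuous time; I would simply cite Properties~\ref{prop:1v1barrier} and~\ref{prop:pincer} (which already encode the relevant reachability/barrier guarantees from \cite{shishika2018cdc,shishika2019shape}) rather than re-deriving trajectories, keeping the proof at the level of the decomposition combinatorics.
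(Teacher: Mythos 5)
Your proposal is correct and follows essentially the same route as the paper, whose entire proof is the one-line citation of Lemma~\ref{lem:lgs} applied to the disjoint LGRs in $\djset^*$; your extra bookkeeping for shared boundary defenders is sound (and in fact largely unnecessary, since an intruder in $\awinrc^k$ lies in the intruder-winning region of the 2v1 game against $\dboundary^k$, so the boundary defenders capture no one in $\subteam_A^k$ no matter how they commit). The only nit is that the displayed chain $\sum_{k}(n_A^k-n_D^k)\geq\sum_{k}\max\{q_k,0\}$ is backwards in general and only becomes an equality after you invoke your own parenthetical restriction of $\djset^*$ to regions with $q_k>0$.
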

\begin{proof}
Lemma~\ref{lem:lgs} and the discussion above.
\end{proof}

Now the question is whether the defenders can prevent the intruders from scoring any more than $\Qest$ -- is it also an upper bound?
We address this question in Sec.~\ref{sec:defense_policy}.
% Also, noting that $\Qest$ is computed based on the instantaneous positions, another question is how $\Qest(t)$ changes over time.
% A defense policy that guarantees the score to be bounded from above by $\Qest(t_0)$ is derived next.

% \input{defense_policy.tex}

\section{Extended Engagement Types} \label{sec:extended_engagements}
\noindent 
We discuss two additional types of engagements that our defense strategy in Sec.~\ref{sec:defense_policy} leverages: the \emph{\implicit} and the \emph{\sequential}.
% Although these engagements will be analyzed using the 1v1 and 2v1 building blocks, 
The distinct feature of these engagements is the aspect of dynamically changing assignments which the existing MM and MIS approaches fail to capture.
% We will show that there exists a ``greedy'' behavior that only considers current configuration, and yet performs optimally over the entire game duration.

\subsection{Implicit Assignment Structure}
\begin{definition}
The \emph{\textbf{\implicit}} structure is a two vs.\ two engagement where the defender pair is assigned to one intruder in $\dwinrp(i,j)$, and the other intruder is in $\dwinr(i)\cap\dwinr(j)$, defined as the \implicit\ zone.
\end{definition}
% Figure: unassigned defense
\begin{figure}[t]
\centering
\includegraphics[width=.48\textwidth]
{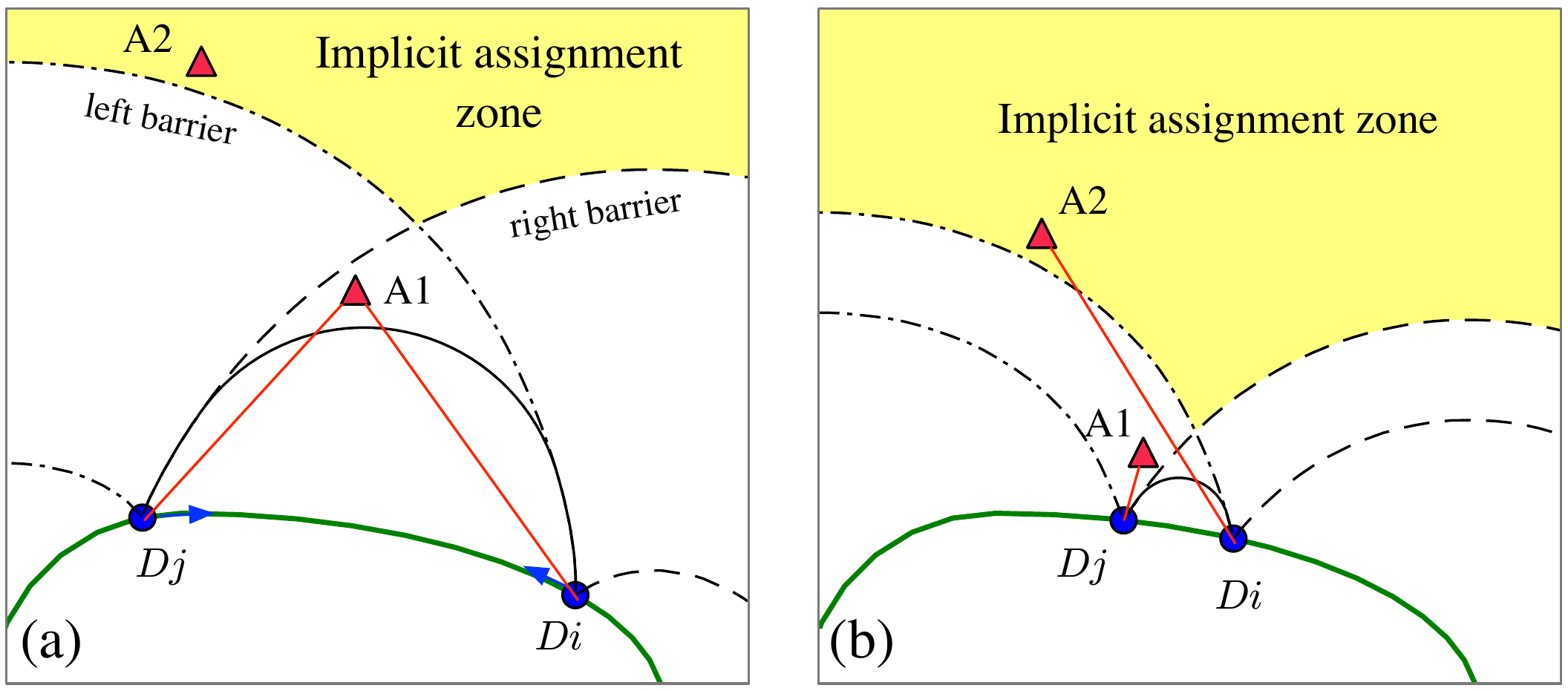}
\caption{The \implicit\ structure.
(a) Initial configuration where $A_1$ has 2v1 assignment, and $A_2$ is in the \implicit~zone.
(b) The structure is resolved into two 1v1 assignments.
}
\label{fig:unassigned_defense}
\vspace{-10pt}
\end{figure}
See Fig~\ref{fig:unassigned_defense}a for an example.
First note that MM assignment would ignore $A_1$ since it is not capturable by any individual defender.
The MIS assignment chooses from assigning (i) the pair to $A_1$, or (ii) one defender to $A_2$ (same as MM), but cannot have both because one of the defenders will have overlapping assignments.
Importantly, the two choices are equally good in the MIS analysis since they both guarantee one capture.
What MIS fails to see is that initially assigning the pair to $A_1$ is the optimal choice here, and in fact it leads to the capture of both intruders:
%Suppose the defenders are performing 2v1 against A1, and A2 is ignored for now.
%Algorithm~\ref{alg:LGRdefense} assigns the pair ($\di,\dj$) to $A_1$, but no defender to $A_2$ because there is no free defender for the 1v1 assignment nor it is in $\dwinrp(i,j)$ for the 2v1 assignment.
%Nevertheless, the intruder in this position will eventually get a unique 1v1 assignment.

\begin{lemma}\label{lem:unassigned_defense1}
The \emph{\implicit} structure turns into two 1v1 assignments before any intruder reaches the perimeter.
\end{lemma}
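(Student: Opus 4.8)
The plan is to track how the winning regions evolve under the pincer maneuver and argue that the ``implicit zone'' is invariant long enough for a hand-off to occur. Label the intruders so that $A_1 \in \dwinrp(i,j)$ (the paired-defense region of the pair $D_i, D_j$) and $A_2 \in \dwinr(i)\cap\dwinr(j)$. Have the defenders apply the pincer control $[\omega_{D_i},\omega_{D_j}]=[1,-1]$ from Property~\ref{prop:pincer}. By that property, after finite time the intruder $A_1$ is driven into $\dwinr(D_i)$ or $\dwinr(D_j)$ while $A_1$ is still strictly inside the perimeter; without loss of generality say $A_1$ enters $\dwinr(D_i)$. At that instant we reassign $D_i$ to $A_1$ as a 1v1 engagement, which $D_i$ wins by Property~\ref{prop:1v1barrier}, and we must show that $D_j$ can still win 1v1 against $A_2$.

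The key step is the second assertion: that $A_2$ remains in $\dwinr(D_j)$ at the hand-off time. First, I would observe that while the pincer is executed, $D_j$ moves clockwise; by Property~\ref{prop:1v1barrier}, moving clockwise prevents any intruder from penetrating the right barrier of $D_j$, so $\dwinr(D_j)$ is ``defended'' in the sense that no intruder currently in it can leave through the relevant barrier. Since $A_2$ starts in $\dwinr(D_i)\cap\dwinr(D_j)\subseteq\dwinr(D_j)$ and the best $A_2$ can do against this motion of $D_j$ is to stay in $\dwinr(D_j)$ (it cannot reach the perimeter without first crossing into $\awinr(D_j)$), $A_2$ is still capturable by $D_j$ alone when the reassignment happens. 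Thus both intruders are covered by 1v1 assignments, and neither has reached the perimeter, which is exactly the claim. A symmetric argument handles the case $A_1$ enters $\dwinr(D_j)$ (then $D_i$ takes $A_2$, using its counter-clockwise motion and the left barrier of $D_i$).

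The main obstacle I anticipate is making the ``$A_2$ stays in $\dwinr(D_j)$'' step rigorous, because the pincer motion of $D_j$ is dictated by the 2v1 game against $A_1$, not tailored to contain $A_2$ — so one needs that the clockwise-motion invariance from Property~\ref{prop:1v1barrier} does not depend on $D_j$ playing its 1v1-optimal strategy, only on the sign/direction of its motion. I would lean on the phrasing of Property~\ref{prop:1v1barrier} (``so long as the defender is moving clockwise''), which is exactly a directional, not optimal-play, condition, and note that during the pincer $D_j$'s motion is purely clockwise. A minor secondary point is the finite-time and ``before reaching the perimeter'' qualifier, which is inherited directly from Property~\ref{prop:pincer}; I would just cite it rather than re-derive the timing. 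One should also note the degenerate possibility that $A_1$ enters both $\dwinr(D_i)$ and $\dwinr(D_j)$ simultaneously, in which case either reassignment works and the argument only gets easier.
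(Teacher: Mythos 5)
Your proposal is correct and follows essentially the same route as the paper's proof: invoke Property~\ref{prop:pincer} to drive $A_1$ into $\dwinr(i)$ or $\dwinr(j)$ before it reaches the perimeter, and use the directional barrier-invariance of Property~\ref{prop:1v1barrier} (ccw motion of $D_i$ protecting its left barrier, cw motion of $D_j$ protecting its right barrier) to keep $A_2$ in the implicit zone until the hand-off. The only cosmetic difference is that the paper asserts containment of $A_2$ in the intersection $\dwinr(i)\cap\dwinr(j)$ up front and then splits cases, whereas you split cases first and argue containment in one region per branch; your explicit remark that the barrier invariance depends only on the \emph{direction} of the defender's motion, not on 1v1-optimal play, is exactly the point the paper relies on implicitly.
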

\begin{proof}
From the 2v1 assignment against $A_1$, $\di$ moves ccw and $\dj$ moves cw.
Under this movement, $A_2$ remains in $\dwinr(i)\cap\dwinr(j)$ because it cannot penetrate $\di$'s left barrier nor $\dj$'s right barrier (Prop.~\ref{prop:1v1barrier}).
Before they meet at the midpoint, $A_1$ will move into either $\dwinr(i)$ or $\dwinr(j)$ (Prop.~\ref{prop:pincer}).
Suppose without the loss of generality $\mf x_{A_1}\in \dwinr(i)$, then $A_2$ will get 1v1 assignment to $\dj$. 
\end{proof}

We use the term ``implicit'' to highlight the fact that no defender is ``explicitly'' assigned to $A_2$.
Nevertheless, the greedy behavior against $A_1$ guarantees the capture of $A_2$ as well.
The second structure (i.e., a combination of configuration and assignment) is defined next.

\subsection{Sequential Paired Defense Structure}
\begin{definition}
The \emph{\textbf{\sequential}} structure is an $n+1$ vs.\ $n$ engagement generated by $n$ interrelated 2v1 games with overlapping defenders.
\end{definition}

% Figure: sequential 2v1
\begin{figure}[t]
\centering
\includegraphics[width=.48\textwidth]
{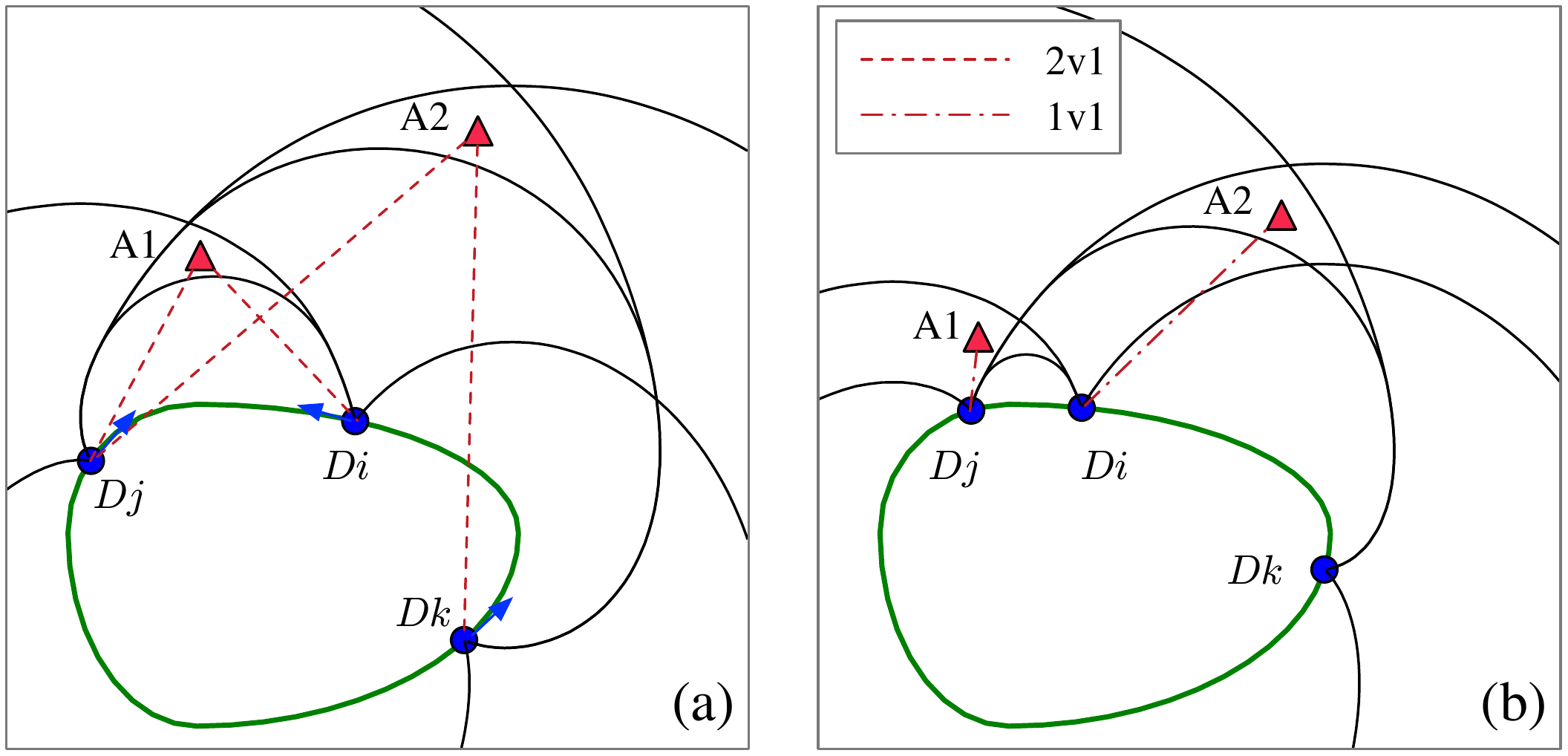}
\caption{
Sequential 2v1 structure with $n=2$.
(a) The defender $D_j$ has two 2v1 assignments, and cw motion is required for both.
(b) The structure is resolved into two 1v1 assignments.
}
\label{fig:sequential_2v1}
\end{figure}

Fig.~\ref{fig:sequential_2v1}a shows the simplest example of such configuration.
Note that we are now considering an assignment that is prohibited in the MIS formulation; defender $D_j$ has two 2v1 games assigned to it.
Also note that, by definition, each defender pair can have at most one intruder assigned to it (otherwise, it will not be $n+1$ vs. $n$).
The key concept to analyze the success in this configuration is defined next:

\begin{definition}
The overlapping 2v1 assignments are \emph{\textbf{conflicting}} if they require opposing (cw and ccw) direction of motion from the defender.
\label{def:conflict}
\end{definition}

For example, $D_j$ in Fig.~\ref{fig:sequential_2v1}a has non-conflicting assignments, because the two assignments both require cw motion.
Despite the overlap, it can behave optimally for both $A_1$ and $A_2$, which leads to the resolved configuration in Fig.~\ref{fig:sequential_2v1}b.
We formally show in Sec.~\ref{sec:perfect_defense} that the \sequential\ leads to the capture of all intruders if they are non-conflicting.
% The defense algorithm that leverages these engagement types is introduced next.
% We defer the detailed discussions and the performance guarantees to Sec.~\ref{sec:perfect_defense}
%Each defender pair has at most one intruder assigned to it, but the individual defender may have more than one 2v1 games assigned to it.
%Fig.~\ref{fig:sequential_2v1}a shows the simplest example of such configuration.
%Three defenders are interrelated through the pairing for the cooperative defense.
%Defender $D_j$ has two 2v1 games assigned to it, which was prohibited in the MIS algorithm.
%\begin{lemma}
%If the overlapping assignments demand the same direction of motion for each defender, then the \sequential structure will turn into 1v1 assignments in finite time.
%\end{lemma}
%\begin{proof}
%Looking at each 2v1 game separately, the defender pair is performing optimally against the intruder.
%Recalling the 2v1 analysis, the intruder in $\dwinrp$ will eventually move into the defender winning region of either of the two boundary defenders (Property~\ref{}).
%Every time such transition occurs, the number of interrelated 2v1 game decreases, and we have additional 1v1 and/or implicit assignment configuration.
%\end{proof}

\subsection{Independent Local Game Score}
Before we present our defense algorithm, we extend the concept of LGS defined for LGRs in Lemma~\ref{lem:lgs} to I-LGRs.
The \emph{independent local game score} (I-LGS) is defined as
\begin{gather}
\lgsi_k \triangleq \hat{n}_A^k - n_D^k =  q_k + \Delta n_A^k,
\end{gather}
where $\Delta n_A^k$ is the number of intruders in the paired defense region $\dwinrp^k$.

The local game score of $q_k=0$ was a critical number in Sec.~\ref{sec:QLG} since it meant that any additional intruder in the LGR will immediately lead to a positive intruder score (Lemma~\ref{lem:lgs}). 
% We say that $k$th LGR (resp.~I-LGR) is \emph{occupied} if $q_k\geq 0$ (resp.~$\lgsi_k\geq 0$): every defender in the region needs to capture one intruder to avoid intrusion.
% We say that $k$th LGR is \emph{occupied} if $q_k\geq 0$: every defender in the LGR needs to capture one intruder to avoid intrusion.
The I-LGS has another critical number: $\lgsi_k=1$.
Unlike $q_k=1$, this configuration does not immediately lead to intruder score; suppose $q_k \leq 0$ and $\lgsi_k = 1$, then one intruder in $\dwinrp^k$ can be captured by the pincer maneuver of the boundary defenders, and the rest may be captured by the defender subteam $\subteam_D^k$.
What $\lgsi_k=1$ immediately tells us is that the boundary defenders $(D_R^k,D_L^k)$ need to perform a pincer maneuver.
Otherwise the intruder in $\dwinrp^k$ can enter $\awinrc^k$ to achieve $q_k>0$, which guarantees a score.

\section{The LGR Defense Policy \label{sec:defense_policy}}
\noindent 
% This section proposes the defender team strategy that outperforms the existing MM and MIS policies.
This section present the LGR-defense policy and its performance guarantees.
Combined with the intruder team strategy in Theorem~\ref{thm:Qlow}, we will discuss the optimality of the strategy.
% We then propose a defense strategy that captures all intruders when $\Qest=0$, with some additional conditions.
% Finally, we show that the score is bounded from above by $\Qest$ under a certain condition, indicating that 
% With these two results, we show that $\Qest$ is the exact outcome (equilibrium solution) of the game.

\subsection{The LGR Defense Algorithm}
\noindent
The LGR defense strategy is presented in Algorithm~\ref{alg:LGRdefense}.
It is continuously run throughout the game to update the defender to intruder assignments that uniquely define each defender's direction of motion.
{As illustrated in Sec.~\ref{sec:extended_engagements}, the assignments will dynamically change as the game evolves in time.}
% \todo{additional comment about dynamic aspect here}
% ALGORITHM
\begin{algorithm}[h]
\caption{LGR Defense \label{alg:LGRdefense}}
\begin{small}
\begin{algorithmic}[1]
\State Remove uncapturable intruders using Algorithm~\ref{alg:removal}
\For{every time step}
\State Assign 2v1 defense using Algorithm~\ref{alg:2v1assignment}
\State Assign 1v1 defense using Algorithm~\ref{alg:1v1assignment}
\State {\bf Return}: assignments (i.e., direction of motion)
\EndFor
\end{algorithmic}
\end{small}
\end{algorithm}

First, Algorithm~\ref{alg:removal} removes/ignores intruders from the game to consider a virtual game with $\Qest=0$. 
% We first show that any game with $\Qest>0$ can be reduced to a virtual game with $\Qest=0$ (\emph{zero score game}) by removing intruders selected by the following algorithm:
The purpose is to ignore the intruders that cannot be captured, and make sure all other intruders will be captured.
This algorithm needs to be run only once at the beginning of the game.
% ALGORITHM:
\begin{algorithm}[h]
\caption{Removal of uncapturable intruders \label{alg:removal}}
\begin{small}
\begin{algorithmic}[1]
\State Initialize: $\aignore=\emptyset$, and $i=1$ 
\While {$\Qest>0$}
    \State $Q_{new}\gets$ compute $\Qest$ without $\{A_i,\aignore \}$
    \If {$Q_{new} < \Qest$}
        \State Append $A_i$ to $\aignore$
        \State $\Qest\gets Q_{new}$
    \EndIf
    \State $i\gets i+1$
\EndWhile
\State {\bf Return}: $\aignore$
\end{algorithmic}
\end{small}
\end{algorithm}

% the intruders that are guaranteed to score are ignored so that the defenders can focus on the rest. % (to be discussed in Sec.~\ref{sec:saddle_point}).
Then the cooperative 2v1 defense and the individual 1v1 defense are considered sequentially.
% To simplify the proofs in the sequel, 
In this paper, we restrict ourselves to the case when $\lgsi_k\leq 1,\,\forall\,k$ after the removal of uncapturable intruders.
The assignment of cooperative 2v1 games is presented in Algorithm~\ref{alg:2v1assignment}.
% ALGORITHM:
 \begin{algorithm}[h]
 \caption{Assign 2v1 Defense \label{alg:2v1assignment}}
 \begin{small}
 \begin{algorithmic}[1]
\State Initialize: $\mf A_\text{assgn}=\emptyset$, $\mf D_\text{2v1}=\emptyset$, and $\mf D_\text{1v1}=\dteam$
 \While{$\exists\,k$ s.t. $\lgsi_k = 1$}
\State $m \gets \argmax_k n_D^k$ for $k$ s.t. $\lgsi_k = 1$
     \State $A_j \gets$ select one intruder in $\dwinrp^m$
    %  \State Assign $A_j$ to the pair ($D_R^m,D_L^m$)
     \State Append $A_j$ to $\mf A_\text{assgn}$, and $m$ to $\mf D_\text{2v1}$
     \State Remove $D_R^m$ and $D_L^m$ from $\mf D_\text{1v1}$
     \State Recompute $\lgsi_k$ for all $k$ with $\ateam \setminus \mf A_\text{assgn}$
 \EndWhile
\State {\bf Return}: $\mf D_\text{2v1}$, $\mf D_\text{1v1}$ and $\mf A_\text{assgn}$
 \end{algorithmic}
 \end{small}
 \end{algorithm}

The I-LGRs with $\lgsi=1$ are visited sequentially from the largest to the smallest in terms of the subteam size.
In each iteration, one intruder in $\dwinrp$ is assigned to the boundary defenders (line 5) regardless of whether the defenders already have other 2v1 assignments or not, which potentially generates the \sequential\ structure.
The indices saved in $\dteam_\text{2v1}$ is sufficient to know which defender should perform a pincer movement with which pair.
The already assigned intruders are removed from the computation of $\lgsi$ (line 8), and the while loop terminates when all the I-LGRs have $\lgsi\leq 0$, which occurs in less than $N_A$ iterations.
Since the bottleneck in the while loop is the recalculation of $\lgsi$'s which is $O(N_D^2N_A)$, the time complexity of Algorithm~\ref{alg:2v1assignment} is $O(N_D^2N_A^2)$.

% ALGORITHM 1v1:
 \begin{algorithm}[h]
 \caption{Assign 1v1 Defense \label{alg:1v1assignment}}
 \begin{small}
 \begin{algorithmic}[1]
\State Inputs: $\ateam_\text{assgn}$, $\dteam_\text{2v1}$, and $\dteam_\text{1v1}$ 
 \State $\ateam_\text{1v1}\gets \ateam \setminus \ateam_\text{assgn}$
\State Initialize a bipartite graph with $\{\dteam_\text{1v1},\dteam_\text{2v1}\}$ and $\ateam_\text{1v1}$
 \For{$k$ in $\dteam_\text{2v1}$}
     \If {inequality in Eq.~\eqref{eq:ineligible} is satisfied}
    %  	\State Draw edges to intruders in the \implicit\ zone
     	\State Draw edges to $\forall\,A_j$ s.t. $\mf x_{A_j}\in\dwinr(D_R^k)\cap\dwinr(D_L^k)$
     \EndIf
 \EndFor
 \For{$D_i$ in $\dteam_\text{1v1}$}
    %  \State Draw edges to intruders in the defender winning region
     \State Draw edges to $\forall\,A_j$ s.t. $\mf x_{A_j}\in\dwinr(D_i)$
 \EndFor
 \State Solve maximum matching and save the edges for $\dteam_\text{1v1}$
 \State {\bf Return}: 1v1 assignments (edges for $\dteam_\text{1v1}$)
 \end{algorithmic}
 \end{small}
 \end{algorithm}

When assigning 1v1 games, Algorithm~\ref{alg:1v1assignment} accounts for the \implicit\ structure.
The implicit assignment is still applicable to the pairs in \sequential\ structure, but with a restriction.
Noting that each structure involves $n+1$ defenders and $n$ intruders, we can only consider one additional ``implicit'' assignment, otherwise there will be more intruders than the defenders.
To guarantee capture, it is necessary and sufficient for the additional intruder to be in the \implicit\ zone of the outermost defender pair (e.g., $(D_k,D_j)$ in Fig.~\ref{fig:sequential_2v1}).
% We consider this by disqualifying a 2v1 defender pair, $\dboundary^k$, from having an \implicit\ if it satisfies the following condition:
% \bnl
% \begin{array}{l}
% (a)\; \exists\,m \text{\;with\;}\lgsi_m=1,\;\text{s.t.,}\;  \dboundary^k\cap\dboundary^m\neq\emptyset \text{\; and} \\
% (b)\; n_D^m > n_D^k.
% \end{array}
% \label{eq:ineligible}
% \enl
In summary, a 2v1 defender pair, $\dboundary^k$, can have an \implicit\ if $D_R^k$ or $D_L^k$ are not used in any other 2v1 defense that is associated to a larger I-LGR, i.e.,
% \begin{itemize}
%     \item[(a)] It does not share a defender with other 2v1 defenses; and
%     \item[(b)] It shares a defender with other 2v1 defenses, but 
% \end{itemize}
% \bnl
% \begin{array}{l}
% \text{$D_R^k$ or $D_L^k$ are not used in any other 2v1 defense}\\
% \text{that is associated to a larger I-LGR,}
% (a)\; \text{There is no overlapping assignment;} \text{\; or} \\
% (b)\; \text{Overlap with $\dboundary^m$,} \text{\;but\;} n_D^m < n_D^k.
% \end{array}
% \label{eq:ineligible}
% \enl
\bnl
n_D^k > n_D^m, \forall m\in \dteam_\text{2v1}\;\text{s.t.}, \dboundary^m\cap\dboundary^k\neq\emptyset
\label{eq:ineligible}
\enl
% \vspace{-10pt}

% Any intruder pair that does not satisfy this condition is eligible to have an implicit assignment.
The solution of the matching problem (line 9) only affects the behavior of the defenders in the set $\dteam_\text{1v1}$, but not the pairs in $\dteam_\text{2v1}$ because the assignments are only ``implicit'' for them.
This is why we only need the edges for $\dteam_\text{1v1}$.
The computational bottleneck in Algorithm~\ref{alg:1v1assignment} is the maximum matching, which is known to have efficient polynomial time algorithms \cite{Chen2017}.
Noting that $|\{\dteam_\text{1v1},\dteam_\text{2v1}\}|\leq N_D$ and $|\ateam_\text{1v1}|\leq N_A$, the number of nodes in the bipartite graph is at most $N_D+N_A$, which leads to $O((N_A+N_D)^w)$ time with $w<2.5$ \cite{Mucha2004}.
% The number of unique defenders involved in the 2v1 assignments, $N_D'$, is bounded by $|\dteam_\text{2v1}|\leq N_D'\leq 2|\dteam_\text{2v1}|$.

% % Figure: 
% \begin{figure}[t]
% \centering
% \includegraphics[width=.4\textwidth]
% {figures/assignment_example}
% \caption{
% }
% \label{fig:assignment_ex}
% \end{figure}
% %
% \todo{Use the figure in Simulation section to reduce space.}
An example of the assignments given by Algorithm~\ref{alg:LGRdefense} is shown in Figure~\ref{fig:sim_example}a simulation results.
% Figure~\ref{fig:sim_example}a shows an example of the assignments given by Algorithm~\ref{alg:LGRdefense}.
Notice that a defender can have multiple 2v1 games assigned to it.
Algorithm~\ref{alg:2v1assignment} generated the \sequential\ structure involving $D_1,D_2,D_3$ and $D_4$.
% This is one of the main differences from the existing defense strategies (MM or MIS approach), and will be explained in detail in the next subsection.
%We will show that such overlapping assignments all require the same direction of motion from the defender, and will eventually dissolve into multiple 1v1 games.
% We will show that such overlapping assignments will eventually dissolve into multiple 1v1 games.
After the 2v1 assignments, $A_4$ and $A_5$ are left for 1v1 assignments.
The pair $(D_1,D_4)$ is eligible for the \implicit, and only $A_5$ is in the \implicit\ zone.
% Notice that intruder $A_5$ has no assignment at this time.
Therefore, the maximum matching in Algorithm~\ref{alg:1v1assignment} selects $A_4$ to be pursued by $D_5$ and leaves $A_5$ to be implicitly assigned to $(D_1,D_4)$.
% Another major difference from the existing defense policies is that, with our algorithm, even those intruders without explicit assignment at the current time can be shown to be captured when certain conditions are met.
% In fact, we will show in the next section that all intruders will be captured if $\Qest=0$ and $\lgsi_k\leq 1,\; \forall\, k$.
We will prove the performance guarantees in the remaining sections.

% Perfect defense
\subsection{Performance Guarantees} \label{sec:perfect_defense}
\noindent
This section proves that the LGR defense in Algorithm~\ref{alg:LGRdefense} guarantees the overall score to be zero if the game starts in a configuration with $\Qest(\mf z_0)=0$ and $\lgsi_k(\mf z_0)\leq~1,\forall\,k$.
% This section finds the configuration for which the LGR defense (Algorithm~\ref{alg:LGRdefense}) achieves perfect defense: i.e., guarantees zero score $Q=0$.
We first provide instantaneous properties:

\begin{lemma}\label{lem:completeness}
If $\Qest(\mf z)=0$ and $\lgsi_k(\mf z)\leq 1,\forall\,k$, then no intruder is unmarked: i.e.,
every intruder has either (i) 1v1 assignment, (ii) 2v1 assignment, or (iii) implicit assignment.
% In other words, each intruder has one of the following: (i) one defender dedicated to perform 1v1 defense, (ii) a pair of defenders performing 2v1 defense with possible assignment to other 2v1 games, or (iii) a unique pair of defenders with implicit assignment structure.
\end{lemma}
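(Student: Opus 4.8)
The plan is to treat the intruders marked inside Algorithm~\ref{alg:2v1assignment} separately and to reduce the claim for the others to a Hall-type matching condition on the bipartite graph $G$ built in Algorithm~\ref{alg:1v1assignment}. Every intruder appended to $\ateam_\text{assgn}$ by Algorithm~\ref{alg:2v1assignment} receives a 2v1 assignment by construction, so only $\ateam_\text{1v1}\triangleq\ateam\setminus\ateam_\text{assgn}$ needs attention. Such an intruder is ``marked'' exactly when it is matched in the maximum matching of Algorithm~\ref{alg:1v1assignment}, either to a defender in $\dteam_\text{1v1}$ (a 1v1 assignment) or to an eligible pair in $\dteam_\text{2v1}$ (an implicit assignment). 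Since any $\ateam_\text{1v1}$-saturating matching is automatically of maximum size, it is enough to show one exists, i.e.\ by Hall's theorem to verify
\[
|N(\mc S)|\ge|\mc S|\qquad\text{for every }\mc S\subseteq\ateam_\text{1v1},
\]
where $N(\mc S)$, the neighborhood in $G$, collects the defenders in $\dteam_\text{1v1}$ that 1v1-defeat a member of $\mc S$ together with the pairs $k\in\dteam_\text{2v1}$ that satisfy \eqref{eq:ineligible} and whose implicit zone $\dwinr(D_R^k)\cap\dwinr(D_L^k)$ contains a member of $\mc S$.

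First I would gather three facts. (i) When Algorithm~\ref{alg:2v1assignment} halts, $\lgsi_k\le0$ for every $k$ with the I-LGS recomputed on $\ateam_\text{1v1}$; equivalently each I-LGR then holds at most $n_D^k$ intruders of $\ateam_\text{1v1}$. (ii) Every $A_j\in\ateam_\text{1v1}$ is 1v1-defeated by some defender: otherwise $A_j\in\awinr(D_i)$ for all $i$, so picking two perimeter-adjacent defenders $D_a,D_b$ with empty interior subteam gives $A_j\in\awinri(D_a,D_b)$ with $n_D=0$; since $\Qest=0$ forces $n_A=0$ for that region, $A_j\in\dwinrp(D_a,D_b)\subseteq\awinri(D_a,D_b)$, contradicting (i). (iii) By the property that the intersection of two I-LGRs is an I-LGR, the I-LGRs form a finite family closed under intersection, so every intruder has a unique minimal enclosing I-LGR and any set of intruders can be split into subsets, each fitting inside a single I-LGR together with all the defenders that defeat its members, in the manner of a circular-arc graph.

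Then the count for Hall's inequality would run as follows. Using (iii), partition $\mc S=\bigsqcup_t\mc S_t$ and pick pairwise disjoint I-LGRs $K_t$ with $\mc S_t$ and all of its defeaters inside $K_t$. Fix $K=K_t$; fact (i) gives $|\mc S_t|\le n_D^K$. Each of the $n_D^K$ interior defenders of $K$ is either in $\dteam_\text{1v1}$, in which case it is a neighbor in $G$ of the $\mc S_t$-intruders it defeats; or it is in $\dteam_\text{2v1}$, i.e.\ a boundary defender of an assigned pair. Grouping the assigned pairs into the \sequential\ chains they form (connected components under ``share a defender'') and using that Algorithm~\ref{alg:2v1assignment} visits I-LGRs largest-first, every chain through $K$ has an outermost pair satisfying \eqref{eq:ineligible}, hence an eligible implicit slot neighboring the intruder sitting in its implicit zone; charging the interior $\dteam_\text{2v1}$ defenders of $K$ to these eligible pairs recovers at least $n_D^K\ge|\mc S_t|$ distinct elements of $N(\mc S)$. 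Summing over $t$ gives $|N(\mc S)|\ge|\mc S|$.

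The hard part will be this last step. Three points need care: that the split in (iii) can really be chosen so that $\mc S_t$ and all of its defeaters lie in one I-LGR (a geometric statement about how $\awinr(\cdot)$ varies along the perimeter); that the charging ``interior $\dteam_\text{2v1}$ defender $\mapsto$ eligible outermost-pair implicit slot'' stays injective when a \sequential\ chain straddles the boundary between two of the $K_t$; and that the implicit-zone intruder guaranteed by each eligible pair genuinely lies in $\mc S$ and not merely in $\ateam_\text{1v1}$. I expect these to need an induction on the chain structure, organized by the laminar family of (iii), together with the analysis of the \sequential\ structure in Sec.~\ref{sec:extended_engagements} and Lemma~\ref{lem:unassigned_defense1} to identify the eligible pair in each chain; the reduction and facts (i) and (ii) should be routine.
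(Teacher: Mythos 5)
Your overall strategy coincides with the paper's: reduce the claim to saturating $\ateam_\text{1v1}$ in the Algorithm~\ref{alg:1v1assignment} bipartite graph and verify Hall's condition, with the base case supplied by the termination of Algorithm~\ref{alg:2v1assignment} (an intruder with no edge would force some I-LGR to have $\lgsi\geq 1$). The divergence --- and the gap --- is in how Hall's inequality is verified for a general subset. The paper argues by induction on $|W|$: a connected component of the bipartite graph with $|\mc N(W)|\leq n$ and $|W|=n+1$ yields an I-LGR whose defender subteam is $\mc N(W)$ and whose intruder subteam is $W$, hence $\lgsi\geq 1$, contradicting termination of Algorithm~\ref{alg:2v1assignment}. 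You instead attempt a direct charging argument, and as sketched it over-counts the neighborhood. The step ``charging the interior $\dteam_\text{2v1}$ defenders of $K$ to these eligible pairs recovers at least $n_D^K$ distinct elements of $N(\mc S)$'' cannot be right: a \sequential\ chain of $c$ pairs occupies $c+1$ interior defenders of $K$ but contributes only \emph{pair nodes} to the graph, each a single element of $N(\mc S)$ and only when \eqref{eq:ineligible} holds and an $\mc S$-intruder sits in the implicit zone; so those $c+1$ defenders collapse to far fewer than $c+1$ neighbors. Your fact (i) applied to the outer region $K$ alone bounds $|\mc S_t|$ by $n_D^K$, while your construction exhibits only about $n_D^K-c$ neighbors, so the count does not close.

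What rescues the statement, and what your outline is missing, is that the $c$ intruders already consumed by a chain's 2v1 assignments live in nested I-LGRs inside $K$, so the recomputed $\lgsi\leq 0$ of those \emph{smaller} regions limits the unassigned intruders whose only defeaters are chain defenders to one per chain --- exactly matching the single implicit slot. Making this precise amounts to redoing the paper's induction-by-contradiction over the family of I-LGRs restricted to each connected component of a putative deficient set, rather than one application of fact (i) to $K$. Relatedly, your fact (ii) is weaker than the $|W|=1$ case you need: being 1v1-defeated by \emph{some} defender does not produce an edge in the graph if that defender lies in $\dteam_\text{2v1}$ and the intruder is outside that pair's implicit zone; the base case must again go through the minimal enclosing I-LGR and its recomputed $\lgsi$. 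The approach is the right one, but the decisive counting step needs to be replaced by the component-wise induction (or an equivalent refined accounting over nested I-LGRs) before the proof is complete.
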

\begin{proof}
It suffices to show that, after Algorithm~\ref{alg:2v1assignment}, the intruders without 2v1 assignment gets either 1v1 or implicit assignment.
We use Hall's marriage theorem \cite{Hall} to show that the maximum matching in Algorithm~\ref{alg:1v1assignment} covers all the intruders.
% Let $X$ denote the set of intruders without 2v1 assignment.
For any subset $W\subseteq \ateam_\text{1v1}$, let $\mc{N}(W)$ denote the neighbor set: i.e., the subset of defenders that can capture at least one in $W$.
Hall's theorem states that all intruders in $\ateam_\text{1v1}$ are covered if $|W|\leq |\mc{N}(W)|$ for all $W\subseteq \ateam_\text{1v1}$.

For $|W|=1$, it is easy to see by contradiction that every intruder has at least one edge:
if an intruder has no edge, then the smallest I-LGR that contains it will have $\lgsi \geq 1$,
% if an intruder has no edge, it implies that 
% there exists an I-LGR with $\lgsi=1$, 
which contradicts the termination of Algorithm~\ref{alg:2v1assignment}.
% Similarly for $|W|=2$, suppose there exists a pair of intruders with only one defender node that has edges to them.
% It is easy to see that this configuration also leads to the existence of an I-LGR with $\lgsi=1$.

%We have proved $|W|\leq |\mc{N}(W)|$ is true for $|W|=1$ and 2.
%Now, to use mathematical induction, we suppose the above inequality is true for $|W|=1,..,n$ and prove that it guarantees the condition for $|W|=n+1$.
Now, suppose the inequality is true for $|W|=1,..,n$.
We will prove that the inequality is also true for $|W|=n+1$:
If the bipartite graph can be separated into $m$ connected components involving intruder subsets $w_1,.. w_m$, then we have $|\mc{N}(W)|=\sum_i^m |\mc{N}(w_i)|$ because the defender nodes for each subset $w_i$ are disjoint.
Since we have $|\mc{N}(w_i)|\geq |w_i|$ from the supposition that the inequality is true for $|w_i|\leq n$, we obtain $|\mc{N}(W)| \geq \sum_i^m |w_i| = |W|$.
If the bipartite graph is connected and $|\mc{N}(W)|\leq n$, there exists an I-LGR whose defender subteam is $\mc{N}(W)$ and the intruder subteam is $W$.
% Now suppose $|\mc{N}(W)|=n$.
Then this region has the score $\lgsi\geq(n+1)-n=1$.
Since this contradicts the termination of Algorithm~\ref{alg:2v1assignment}, we obtain $|\mc{N}(W)|\geq n+1=|W|$.
% For the case where the bipartite graph is connected, suppose that $|\mc{N}(W)|=n$.
% There exists an I-LGR whose defender subteam is $\mc{N}(W)$ and the intruder subteam is $W$.
% This region has the score $\lgsi=(n+1)-n=1$, which again contradicts with the completeness of the 2v1 assignment.

With mathematical induction, we have shown that $|W|\leq |\mc{N}(W)|$ for $W\subseteq \ateam_\text{1v1}$ with any size.
Therefore, maximum matching in Algorithm~\ref{alg:1v1assignment} covers all the intruders.
\end{proof}

\begin{lemma}\label{lem:sequential_2v1noconflict}
If $\Qest(\mf z)=0$ and $\lgsi_k(\mf z)\leq1,\,\forall\,k$, then the \sequential\ structure generated by Algorithm~\ref{alg:2v1assignment} has no conflict (see Definition~\ref{def:conflict}).
% : all overlapping 2v1 assignments require the same direction of motion from each intruder.
%Suppose Algorithm~\ref{alg:2v1assignment} assigned multiple 2v1 games to $\di$.
%If $\Qest=0$ and $\lgsi_k\leq1,\,\forall\,k$, then all those assignments require the same direction of motion by $\di$.
\end{lemma}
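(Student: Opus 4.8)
The plan is to argue by contradiction: suppose the \sequential\ structure produced by Algorithm~\ref{alg:2v1assignment} contains a conflict. By Property~\ref{prop:pincer} a 2v1 assignment moves its right-boundary defender ccw and its left-boundary defender cw, so a conflict at a defender $D$ can only arise when Algorithm~\ref{alg:2v1assignment} has selected two distinct I-LGRs $m,m'$ with $D=D_R^m$ and $D=D_L^{m'}$. Then the arc of $m$ extends ccw out of $D$ and the arc of $m'$ ends at $D$, so in ccw order we have $D_R^{m'}$, then $D$, then $D_L^m$: the defender $D$ becomes an \emph{interior} (hence unconstrained) defender of the bracketing I-LGR $(D_R^{m'},D_L^m)$, and the $+1$ in its sub-team count $n_D^{m'}+1+n_D^m$ is exactly what the argument below must pay for. (Any other overlap pattern of two selected games --- one arc nested in the other, or crossing arcs --- shares only interior or same-side boundary defenders and cannot force opposing motion; the degenerate case $m=m'$ is impossible because $\dwinrp^{(D,D)}=\emptyset$ leaves no intruder to assign.)

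The next steps are bookkeeping on Algorithm~\ref{alg:2v1assignment}. Its while-loop only removes intruders, so each $\lgsi_k$ is non-increasing along the execution; with $\lgsi_k(\mf z)\le1$ this forces every selected I-LGR (which has $\lgsi=1$ when picked) to already have $\lgsi=1$ at $\mf z$ and to keep all of its intruders until it is picked. The algorithm is equivariant under reversing the orientation of $\curve$ (both $n_D^k$ and $\lgsi_k$ are orientation-free), so I may assume $m$ is picked before $m'$. Let $A$ be the intruder assigned to $m$: then $A\in\dwinrp^m\subseteq\awinri^m=\awinr(D)\cap\awinr(D_L^m)$ (recall $D_R^m=D$), so $A\in\awinr(D)$, while $A\notin\awinrc^m$. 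Since $A$ is removed strictly before $m'$ is picked, $A\notin\hat{\subteam}_A^{m'}$, i.e.\ $A\notin\awinri^{m'}=\awinr(D_R^{m'})\cap\awinr(D)$ (recall $D_L^{m'}=D$); combined with $A\in\awinr(D)$ this gives $A\in\dwinr(D_R^{m'})$, where $D_R^{m'}$ lies strictly clockwise of $D$. The intruder $A^*$ assigned to $m'$ meanwhile satisfies $A^*\in\awinr(D)$, $A^*\notin\awinrc^{m'}$, and $A^*\ne A$.

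The crux --- and the step I expect to be the main obstacle --- is to convert this picture into a region that violates $\lgsi_k(\mf z)\le1$ or $\Qest(\mf z)=0$, which requires the explicit 1v1/2v1 barrier constructions of \cite{shishika2018cdc,shishika2019shape}: an I-LGR is bounded by the left barrier of its right-boundary defender and the right barrier of its left-boundary defender, and $\dwinrp$ is the lens between that I-LGR and the midpoint circle of the pinching pair. Using these, I would show that an intruder which is caught by the pincer of $m$ (so it lies between the barriers of $(D,D_L^m)$ but behind their midpoint circle) \emph{and} is beaten by the defender $D_R^{m'}$ immediately clockwise of $D$ must, together with $A^*$ and the intruders of the nested local sub-games, be enclosed in a single I-LGR $\bar m$ whose sub-team is too small for them, i.e.\ $\hat n_A^{\bar m}(\mf z)\ge n_D^{\bar m}+2$, contradicting $\lgsi_{\bar m}(\mf z)\le1$; an alternative outcome of the same analysis is that the midpoint circles of $m$ and $m'$ jointly certify an LGR with $n_A>n_D$, contradicting $\Qest(\mf z)=0$. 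Either conclusion rules out the conflict. Making this geometric claim precise --- that being trapped on both sides of $D$ forces two ``extra'' intruders into one bracketing region --- is where the real work lies; everything around it is monotonicity of $\lgsi$ plus Lemma~\ref{lem:completeness}-style counting over I-LGRs.
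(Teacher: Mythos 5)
Your structural setup is the same as the paper's: a conflict at a defender $D_i$ requires two selected I-LGRs $l$ and $r$ with $D_R^l=D_i$ and $D_L^r=D_i$, and the object to examine is the bracketing region $\awinri^m=\awinri(D_R^r,D_L^l)$, whose defender subteam absorbs $D_i$ as an interior member so that $\lgsi_m\geq \hat{n}_A^l+\hat{n}_A^r-(n_D^l+n_D^r+1)=1$. But the proposal stops exactly where the proof has to happen: you defer the decisive step (``where the real work lies''), and the target you set for it is the wrong one. You aim to show that the two-sided configuration forces some region to violate $\lgsi_k\leq 1$ or $\Qest=0$; it does not. The configuration is perfectly consistent with those hypotheses ($\lgsi_m=1$ is attainable), so no amount of barrier geometry will produce the contradiction you are hoping for. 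The lemma is not a claim that potentially conflicting configurations cannot exist; it is a claim about what Algorithm~\ref{alg:2v1assignment} does with them.

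The missing idea is short and purely combinatorial: since $\Qest=0$ gives $q_m\leq 0$ while $\lgsi_m=q_m+\Delta n_A^m\geq 1$, there is at least one intruder in $\dwinrp^m$ available for assignment to the pair $\dboundary^m=(D_R^r,D_L^l)$; and since $n_D^m>n_D^l$ and $n_D^m>n_D^r$, the largest-first rule in line~3 of Algorithm~\ref{alg:2v1assignment} visits $m$ before either $l$ or $r$. That assignment removes one intruder from the running count of $l$ or of $r$, so at most one of $\lgsi_l,\lgsi_r$ is still $1$ when those regions are later visited, and $D_i$ receives at most one of the two opposing assignments. Your sketch never invokes the greedy ordering for this purpose --- you use the processing order only to locate the assigned intruder $A$ relative to the barriers --- which is why the argument stalls. (Your bookkeeping observations, e.g.\ that removal of $A$ before $m'$ is picked forces $A\notin\hat{\subteam}_A^{m'}$ under $\lgsi_{m'}(\mf z)\leq 1$, are correct but do not feed into any contradiction.)
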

\begin{proof}
%If $D_i$ serves as the left (or right) boundary defender for all of the 2v1 assignments, then the required motion is the same for all.
%We show by contradiction that $D_i$ does not serve as the left and right boundary defender simultaneously.
%
For $\di$ to have conflicting 2v1 assignments, the following condition is necessary: $\di$ has two I-LGRs, $l$ and $r$ with $\lgsi_l=\lgsi_r=1$, where $D_R^l = \di$ and $D_L^r=\di$.
If $D_i$ gets both assignments, then there is a conflict.
% : $l$ and $r$ denote the indices of the LGRs that $\di$ serves as the left and right boundary defender, with $q_l=q_r=0$ and $\lgsi_l=\lgsi_r=1$.

Consider $\awinri^m\triangleq\awinri(D_R^r,D_L^l)$, which contains both $\awinri^l$ and $\awinri^r$.
The score satisfies $\lgsi_m\geq (n_D^l+1)+(n_D^r+1)-(n_D^l+n_D^r+1)=1$.
% From the condition of the lemma, we know $\lgsi_m=1$.
We also know $\Qest=0 \Rightarrow q_m=0$, implying that $\dwinrp^m$ has at least one intruder to be assigned to $\dboundary^m$.
Since this assignment reduces one intruder from either $l$ or $r$, $\di$ only needs to perform 2v1 defense for one of the two: i.e., $\di$ does not get both assignments.
\end{proof}

The above results are used to study how the assignments and scores evolve over time:

\begin{lemma}\label{lem:sequential_2v1resolve}
% If $\Qest=0$ and $\lgsi_k\leq1,\,\forall\,k$, then all \sequential\ structure given by Algorithm~\ref{alg:2v1assignment} eventually turn into multiple 1v1 games.
If initially $\Qest(\mf z_0)=0$ and $\lgsi_k(\mf z_0)\leq1,\,\forall\,k$, then the number of 2v1 assignments reduces over time, and the conditions $\Qest(\mf z)=0$ and $\lgsi_k(\mf z)\leq1,\,\forall\,k$ are preserved.
\end{lemma}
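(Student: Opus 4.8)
The plan is to argue that the dynamics induced by Algorithm~\ref{alg:LGRdefense} only ever \emph{resolve} structure — turning 2v1 assignments into 1v1 assignments — and never create new 2v1 obligations, so the invariant $\Qest(\mf z)=0$ and $\lgsi_k(\mf z)\le 1$ cannot be broken. First I would set up the continuity argument: the defender positions $s_{D_i}$ and intruder positions $\mf x_{A_j}$ vary continuously in time, and the LGR/I-LGR boundaries (barriers and pincer circles) vary continuously with the boundary-defender positions, so all the integer quantities $n_D^k$, $n_A^k$, $\hat n_A^k$, $q_k$, $\lgsi_k$ are piecewise constant and can only change at discrete ``event times'' when an intruder crosses a barrier, a pincer circle, or the two boundary defenders of some pair meet. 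It therefore suffices to check that at each such event the invariant is preserved and the total number of active 2v1 assignments is nonincreasing.

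Next I would enumerate the relevant events, leaning on the engagement lemmas already proved. While a pair $\dboundary^k$ executes its pincer maneuver (the direction of motion dictated by the 2v1 assignment), Property~\ref{prop:pincer} guarantees the assigned intruder in $\dwinrp^k$ must exit into $\dwinr(D_R^k)$ or $\dwinr(D_L^k)$ before it reaches the perimeter; at that moment the 2v1 assignment is discharged and, by Lemma~\ref{lem:unassigned_defense1} (implicit structure) applied to the outermost pair in the chain, the freed situation is exactly two 1v1 (or implicit) assignments — so the 2v1 count drops by one. Because Lemma~\ref{lem:sequential_2v1noconflict} tells us the \sequential\ chain generated by Algorithm~\ref{alg:2v1assignment} is conflict-free, every defender in the chain can simultaneously honor all of its pincer obligations with a single well-defined direction of motion, so no intruder in any $\dwinrp^l$ of the chain can slip into the corresponding $\awinrc^l$; hence no $q_l$ can go from $0$ to a positive value and no $\lgsi_l$ can jump above $1$. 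For the 1v1-assigned and implicitly-assigned intruders, Property~\ref{prop:1v1barrier} keeps them inside the relevant defender-winning region under the prescribed motion, so they never enter a new LGR either. I would also note that when a pair's two boundary defenders coincide the I-LGR degenerates to a single defender's region, which is the base case $i=j$ and causes no jump.

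The one case that needs genuine care — and which I expect to be the main obstacle — is showing that discharging one 2v1 assignment cannot \emph{simultaneously} push some other region's score up: i.e., that resolving pair $k$ does not remove a defender from the ``budget'' of a larger I-LGR in a way that raises its $\lgsi$ to $2$. The key observation is that resolution does not remove a defender from the game; it converts the pair into two defenders who are now each available for 1v1 (or implicit) duty, and the intruder they were chasing has also been absorbed. So for any I-LGR $m$ overlapping $k$, both $\hat n_A^m$ and the effective defender count change consistently, keeping $\lgsi_m$ at its previous value or lower — and since we started with $\lgsi_m\le 1$ it stays $\le 1$. Combined with the fact that $\Qest=\max_{\djset}\sum_k\max\{q_k,0\}$ can only decrease when a $q_k$ that was $\le 0$ stays $\le 0$ and no new positive $q_k$ appears, we get $\Qest(\mf z)=0$ throughout. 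Finally, since each event strictly decreases the nonnegative integer ``number of active 2v1 assignments,'' this count reaches zero in finitely many steps, which is the monotone-reduction claim.
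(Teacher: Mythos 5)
Your proposal is correct and follows essentially the same route as the paper's proof: invoke Lemma~\ref{lem:sequential_2v1noconflict} so that every pincer is executed without conflict, use Property~\ref{prop:pincer} to obtain the first transition of an assigned intruder out of $\dwinrp$ into a boundary defender's winning region, argue that no $\lgsi_k$ or $q_k$ can increase because no intruder enters a new I-LGR or LGR under the prescribed motions, and conclude by iterating on the strictly decreasing count of 2v1 assignments. The only remark worth making is that the ``defender budget'' concern in your third paragraph is a non-issue --- $\lgsi_m = \hat{n}_A^m - n_D^m$ is a purely geometric quantity independent of which assignments are active, and the paper's cleaner justification is simply that the transitioning intruder leaves $\awinri^{k}$ without entering any I-LGR it was not already contained in, so $\hat{n}_A^m$ is non-increasing while $n_D^m$ is unchanged.
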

\begin{proof}
By Lemma~\ref{lem:sequential_2v1noconflict}, each defender pair performs optimally against the assigned intruder.
From Prop.~\ref{prop:pincer}, the intruder currently in $\dwinrp$ moves into the defender winning region of one of the boundary defenders.
Among all the 2v1 games, consider the one that achieves this transition first.
% Every time such transition occurs, the number of 2v1 assignment decreases, and we have additional 1v1 and/or implicit assignment configuration.
% Now, consider the change in the assignments caused by this transition.
During the transition, the intruder leaves an I-LGR that it was contained in (reducing its I-LGS from 1 to 0), but it does not enter any new I-LGR.
% The I-LGS of this region reduces from $1$ to $0$ (implying there is one less 2v1 assignments),
Such transition guarantees that the scores $q_k$ and $\lgsi_k$ are both non-increasing for all $k$ (see \Appendix\ for more details).
% Also, it can be shown that the condition $\Qest=0$ and $\lgsi_k\leq1$ is preserved after the transition (see \ref{} for details).
We now have one less 2v1 games.
Since the conditions for Lemma~\ref{lem:sequential_2v1noconflict} are still true, similar transitions will repeat until there is no more 2v1 assignments.
% Since the conditions for Lemma~\ref{lem:sequential_2v1noconflict} are still true, the same transitions will repeat until the \sequential\ structure separates into multiple 1v1 games and \implicit\ structures.
% the next transition that reduces the number of 2v1 games occurs in finite time.
% Repeating this process, we see that the \sequential\ structure will separate into multiple 1v1 games and non-overlapping 2v1 games possibly having a implicit assignment structure.
% Recalling that every \implicit\ structure turns into 1v1 games (Lemma~\ref{lem:unassigned_defense1}), the proof is completed.
\end{proof}

\begin{theorem}\label{thm:perfect_defence}
If initially $\Qest(\mf z_0)=0$ and $\lgsi_k(\mf z_0)\leq 1, \forall\,k$, then the LGR defense strategy (Algorithm~\ref{alg:LGRdefense}) guarantees the overall score to be zero.
\end{theorem}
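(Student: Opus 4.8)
The plan is to exhibit an invariant on the game state that Algorithm~\ref{alg:LGRdefense} preserves and that forces every intruder to stay ``covered'' — unable to reach $\partial\target$ — for all time. First I would note that the hypothesis $\Qest(\mf z_0)=0$ makes Algorithm~\ref{alg:removal} vacuous: its \texttt{while} loop never executes, so $\aignore=\emptyset$ and no intruder is discarded. Hence the claim reduces to showing that no intruder ever reaches the perimeter, i.e., all $N_A$ intruders are eventually captured (or at least held off indefinitely).

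Next I would set up the time evolution. By Lemma~\ref{lem:sequential_2v1resolve}, along the trajectory induced by Algorithm~\ref{alg:LGRdefense} the conditions $\Qest(\mf z)=0$ and $\lgsi_k(\mf z)\le 1\;\forall k$ are preserved, and the number of active 2v1 assignments is non-increasing, strictly decreasing by one at each ``transition'' event (when a pincered intruder in some $\dwinrp^k$ crosses into a boundary defender's winning region). Since that count is a nonnegative integer bounded by $N_A$ (each 2v1 pair carries at most one assigned intruder), there are only finitely many transitions; after the last one, no 2v1 assignment remains, and therefore no implicit assignment either.

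The heart of the argument is ruling out any intruder scoring, both between transitions and at transitions. Between two consecutive transitions the assignment is fixed, and since the invariant holds, Lemma~\ref{lem:completeness} guarantees every intruder carries a 1v1, 2v1, or implicit assignment; in each case it cannot reach $\partial\target$ before the next transition. A 1v1-assigned intruder stays in its (distinct) defender's winning region by Property~\ref{prop:1v1barrier}; a 2v1-assigned intruder, against which the boundary pair executes a conflict-free pincer (Lemma~\ref{lem:sequential_2v1noconflict}), stays off the perimeter until it transitions by Property~\ref{prop:pincer}; and an implicitly-assigned intruder remains in the implicit zone and is resolved into a 1v1 assignment before reaching the perimeter, by Lemma~\ref{lem:unassigned_defense1} together with the no-conflict and score-monotonicity facts of Lemmas~\ref{lem:sequential_2v1noconflict} and~\ref{lem:sequential_2v1resolve}. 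At a transition the invariant persists, so reapplying Lemma~\ref{lem:completeness} shows the recomputed assignment again covers every intruder — including the one that just moved, using the defender(s) freed by the resolved pincer. Finally, after the last transition there are no 2v1 or implicit assignments, so Lemma~\ref{lem:completeness} forces each intruder to hold a 1v1 assignment to a distinct defender, and Property~\ref{prop:1v1barrier} keeps every intruder out of $\partial\target$ thereafter. Hence no intruder scores and the overall score is zero.

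I expect the main obstacle to be the bookkeeping in the third step: rigorously certifying that ``every intruder is covered'' survives both the continuous motion and the discrete reassignments — in particular, that an implicit assignment nested inside a \sequential\ structure really behaves like the isolated case of Lemma~\ref{lem:unassigned_defense1}, and that the defender released when one pincer resolves is exactly what the re-solved 1v1 matching needs. This is precisely where the no-conflict property and the non-increasing behavior of $q_k$ and $\lgsi_k$ through a transition (the detail deferred to the \Appendix) carry the weight of the proof.
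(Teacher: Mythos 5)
Your proposal is correct and follows essentially the same route as the paper's proof: it combines the preservation of the invariant $\Qest=0$, $\lgsi_k\le 1$ and the monotone reduction of 2v1/implicit assignments (Lemmas~\ref{lem:sequential_2v1resolve} and~\ref{lem:unassigned_defense1}) with the coverage guarantee of Lemma~\ref{lem:completeness} to conclude that every intruder ends up in a 1v1 assignment before reaching the perimeter. Your version is simply a more explicit unpacking of the paper's three-sentence argument, including the (correct) observation that Algorithm~\ref{alg:removal} is vacuous when $\Qest(\mf z_0)=0$.
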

\begin{proof}
Lemmas \ref{lem:unassigned_defense1}~and~\ref{lem:sequential_2v1resolve} show that the number of \implicit s and 2v1 assignments reduces as the game evolves.
Since the conditions $\Qest=0$ and $\lgsi\leq1$ are preserved throughout the game (Lemma~\ref{lem:sequential_2v1resolve}), we can invoke Lemma~\ref{lem:completeness} to conclude that every intruder gets 1v1 assignment before it reaches the perimeter.
Therefore, no intruder can score.
% By construction, unique 1v1 and 2v1 assignments guarantee capture.
% The \implicit\ and \sequential\ also guarantee capture by Lemmas \ref{lem:unassigned_defense1}~and~\ref{lem:sequential_2v1resolve}.
% Since all intruders have either one of the above assignments (Lemma~\ref{lem:completeness}), the capture of all intruders is guaranteed.
% 
%The in intruders is gutrude of allers have either (i) a unique 1v1 assignment, (ii) 2v1 assignment, or (iii) no assignment.
%By construction, every 1v1 assignment guarantees capture.
%Lemmas~\ref{lem:unassigned_defense1} and \ref{lem:unassigned_defense2} prove that intruders with no assignment will all be captured. 
%Lemma~\ref{lem:sequential_2v1} and the discussion that follows prove that all 2v1 assignments will lead to capture.
\end{proof}

% Saddle point equilibrium
\subsection{Saddle-point equilibrium}
\label{sec:saddle_point}

\noindent
This section addresses a general case when the intruder team can guarantee non-zero score: $\Qest>0$.
% We summarize the results of this section in the following:
First, the following lemma shows that exactly $\Qest$ intruders need to be removed in Algorithm~\ref{alg:removal}.
\begin{lemma}\label{lem:removal}
If $\Qest(\mf z_0)>0$, then there exists a set of $\Qest(\mf z_0)$ intruders whose removal generates a new game with $\Qest=~0$.
% If $\Qest=n>0$, then there exists a set of $n$ intruders whose removal generates a new game with $\Qest=~0$.
\end{lemma}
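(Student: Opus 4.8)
The plan is to use $\Qest$ as defined in \eqref{eq:Qest}, namely as the maximum-weight independent set value over disjoint LGRs with weights $\max\{q_k,0\}$. Let $\djset^*$ be an optimal partitioning achieving this value, so that $\Qest = \sum_{k\in\djset^*}\max\{q_k,0\}$, and discard from $\djset^*$ any index with $q_k\le 0$, so every remaining LGR has $q_k=n_A^k-n_D^k>0$. The goal is to exhibit a set of $\Qest$ intruders whose removal drives the guaranteed score to zero. The natural candidate is, within each LGR $\awinrc^k$ for $k\in\djset^*$, to remove exactly $q_k$ of the $n_A^k$ intruders it contains; summing over $k\in\djset^*$ this removes $\sum_{k\in\djset^*}q_k=\Qest$ intruders, as desired. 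After removal, each such LGR has local game score $q_k-q_k=0$.

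The substance of the argument is showing that this removal actually forces $\Qest=0$ in the new game, i.e.\ that \emph{no} disjoint collection of LGRs can have positive weighted sum afterward. First I would observe that removing intruders can only decrease each $q_k$ (and hence $\max\{q_k,0\}$) weakly, since $n_D^k$ is unchanged and $n_A^k$ can only drop; so $\Qest$ is non-increasing under removal. It remains to rule out $\Qest>0$ after removal. Suppose for contradiction there is an LGR $\awinrc^l$ with positive local score in the new game, i.e.\ still containing strictly more than $n_D^l$ intruders after the removal. I would compare $\awinrc^l$ with the members of $\djset^*$: by Property (the intersection of two I-LGRs is an I-LGR) and the structure of LGRs as circular-arc regions, $\awinrc^l$ either is disjoint from all members of $\djset^*$, or overlaps some of them. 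In the disjoint case, $\djset^*\cup\{l\}$ would be a feasible disjoint collection with weighted sum $\Qest + (\text{positive})$, contradicting optimality of $\djset^*$ in the \emph{original} game (note its local score in the original game was already $q_l\ge$ its post-removal score $>0$, since we removed only intruders outside $\awinrc^l$). In the overlapping case, I would use the circular-arc structure to ``merge'': replace the overlapping members of $\djset^*$ together with $l$ by a single LGR spanning their union (whose boundary defenders are the extreme ones), and argue via a counting inequality on $n_A$ and $n_D$ — analogous to the inequality $\lgsi_m \ge (n_D^l+1)+(n_D^r+1)-(n_D^l+n_D^r+1)$ used in Lemma~\ref{lem:sequential_2v1noconflict} — that the merged region already had weighted score in the original game at least as large as the contribution of the replaced members \emph{plus} the surviving positive surplus from $\awinrc^l$, again contradicting optimality of $\djset^*$.

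I expect the merging/overlap case to be the main obstacle: one must carefully account for which intruders are shared between overlapping LGRs so that the $n_A$ counts add up correctly (avoiding double counting), and confirm that the union of overlapping circular-arc LGRs is itself realizable as an LGR generated by some boundary pair $\dboundary$. A clean way to sidestep part of this is to choose the removed intruders more cleverly — e.g.\ always removing the $q_k$ intruders that are ``deepest'' (closest to the shared boundaries) so that any LGR overlapping several members of $\djset^*$ loses at least the full surplus of each — but verifying that such a choice exists still reduces to the same circular-arc combinatorics. Once the contradiction is obtained in both cases, $\Qest=0$ after removal, completing the proof; and as a bonus the argument identifies \emph{which} $\Qest$ intruders Algorithm~\ref{alg:removal} will end up discarding.
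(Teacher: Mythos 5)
There is a genuine gap, and it sits exactly where you suspected: the overlap case. Your recipe --- fix one optimal collection $\djset^*$ and delete $q_k$ intruders from each $\awinrc^k$ with $k\in\djset^*$ --- does not force $\Qest=0$ for an arbitrary choice of which $q_k$ intruders to delete. Concretely, take defenders $D_1,\dots,D_4$ in ccw order and two overlapping LGRs $k$ and $l$ with boundary pairs $(D_1,D_3)$ and $(D_2,D_4)$, so $n_D^k=n_D^l=1$. Place $A_1\in\awinrc^k$ only, $A_3\in\awinrc^l$ only, and $A_2$ in the intersection. Then $q_k=q_l=1$, the enclosing region $m=(D_1,D_4)$ has $n_A^m=3$, $n_D^m=2$, $q_m=1$, and $\Qest=1$ with $\djset^*=\{k\}$ optimal. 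Removing $A_1$ from $\awinrc^k$ leaves $q_l=1$, so $\Qest$ is still $1$; only removing the shared intruder $A_2$ works. Your proposed repair via the merging inequality does not close this, because the bound you can actually get is $q_m\geq q_k+q_l-|A(k)\cap A(l)|+1$: the step $n_A^m\geq n_A^k+n_A^l$ is only valid when the two regions share \emph{no} intruders, and in your setup the surviving positive surplus of $\awinrc^l$ may live entirely on intruders it shares with $\awinrc^k$. So neither the disjoint/overlap dichotomy nor the ``remove the deepest'' heuristic, as stated, yields the contradiction.

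The paper avoids this by not committing to a batch removal at all. It argues iteratively: when $\Qest>0$ there exists \emph{one} intruder whose removal drops $\Qest$ by exactly one; remove it, re-optimize, and repeat $\Qest(\mf z_0)$ times. The existence step is a contradiction argument of the flavor you attempted, but with the crucial extra structure that if no single intruder reduces $\Qest$, then there are two optimal collections $\djset_1^*,\djset_2^*$ whose intruder subteams are pairwise disjoint; merging a positive region of one into the other then does give $n_A^m\geq n_A^k+n_A^l$ (no double counting, by construction) together with $n_D^m<n_D^k+n_D^l$, hence $q_m>q_k+q_l$ and the contradiction with optimality. That disjointness-of-intruders device is the missing idea in your write-up; without it, or without a proof that a ``clever'' choice of removed intruders exists, the argument does not go through.
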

\begin{proof}
When $\Qest(\mf z)>0$, one can show by contradiction that there exists at least one intruder whose removal reduces $\Qest$ by $1$ (see \Appendix\ for more details).
After removing this intruder, the same argument holds if $\Qest$ is still positive.
Repeating this process, the score reduces to zero after removing $\Qest(\mf z)$ intruders.
\end{proof}

% Let $\aignore$ be a set of intruders with cardinality $|\aignore|=\Qest$, such that the removal of those intruders generate a new game with $\Qest=0$.
 Following the proof of Lemma~\ref{lem:removal}, we can find the set $\aignore$ in Algorithm~\ref{alg:removal} by visiting every intruder and asking whether it reduces $\Qest$ or not (Algorithm~\ref{alg:removal}).
Importantly, the answer remains the same whether $\Qest$ is already reduced or not.
Therefore, each intruder needs to be visited only once; Algorithm~\ref{alg:removal} terminates within less than $N_A$ iterations.
Since the bottleneck in the while loop is the computation of $\Qest$ (line 3), the time complexity is $O(N_AN_D^4)$.
% Every time the answer is yes, we remove that intruder from the game, and continue until $\Qest=0$.
% Assuming that the set $\djset^*$ is already known, the complexity is linear in $N_A$ since each intruder is visited only once.

The main result is summarized in the following:
\begin{theorem}\label{thm:Qup}
If initially $\hat{q}_k(\mf z_0)\leq 1,\,\forall\,k$ after removing intruders found by Algorithm~\ref{alg:removal}, then the LGR defense policy $\controlset_D^*$ guarantees
\bql
Q(\mf z_0; \controlset_D^*,\controlset_A)\leq  \Qest(\mf z_0), \; \forall \, \controlset_A\in U_A,
\label{eq:uDguarantee}
\eql
where $U_A$ is the set of all permissible intruder strategies.
\end{theorem}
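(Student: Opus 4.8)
The plan is to reduce the claim to the special case already handled by Theorem~\ref{thm:perfect_defence}. The key observation is that Algorithm~\ref{alg:removal} removes exactly $\Qest(\mf z_0)$ intruders (Lemma~\ref{lem:removal}), producing a residual game on the remaining intruders in which the guaranteed score is zero. I would first argue that after this removal the hypothesis $\hat q_k(\mf z_0)\le 1$ for all $k$ (which is assumed to hold on the reduced configuration) together with $\Qest=0$ on the reduced game puts us exactly in the setting of Theorem~\ref{thm:perfect_defence}. Hence the LGR defense policy $\controlset_D^*$, which by construction (Algorithm~\ref{alg:LGRdefense}, line~1) ignores the removed intruders and plays the defense of Theorem~\ref{thm:perfect_defence} on the rest, captures \emph{every} non-removed intruder regardless of what the intruder team does. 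That immediately bounds the score: at most the $\Qest(\mf z_0)$ removed intruders can ever reach the perimeter, so $Q(\mf z_0;\controlset_D^*,\controlset_A)\le \Qest(\mf z_0)$.

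In more detail, the steps I would carry out are: (i) invoke Lemma~\ref{lem:removal} to fix a removal set $\aignore$ with $|\aignore|=\Qest(\mf z_0)$ and $\Qest=0$ on the game restricted to $\ateam\setminus\aignore$; (ii) note that Algorithm~\ref{alg:removal} indeed returns such a set, as argued in the discussion following Lemma~\ref{lem:removal}; (iii) observe that the remaining three lines of Algorithm~\ref{alg:LGRdefense} operate only on $\ateam\setminus\aignore$, so the defenders execute precisely the strategy analyzed in Theorem~\ref{thm:perfect_defence} on the reduced game, whose hypotheses $\Qest=0$ and $\hat q_k\le 1$ hold by assumption; (iv) conclude from Theorem~\ref{thm:perfect_defence} that no intruder in $\ateam\setminus\aignore$ scores, against any $\controlset_A$; (v) since the only intruders not guaranteed to be captured are the $\Qest(\mf z_0)$ removed ones, the total score is at most $\Qest(\mf z_0)$.

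The subtle point — and the main obstacle — is justifying step~(iii), i.e.\ that the defenders can safely \emph{ignore} the removed intruders: one must be sure that the removed intruders cannot, by deviating, degrade the defense of the remaining intruders, e.g.\ by ``passing through'' an LGR and forcing a boundary defender off its pincer maneuver, or by re-entering some I-LGR and pushing an $\hat q_k$ above $1$. The resolution is that the defense policy defined by Algorithms \ref{alg:2v1assignment}--\ref{alg:1v1assignment} only ever reads the positions of the non-removed intruders when computing $\lgsi_k$, $n_A^k$, assignments, and directions of motion; a removed intruder's trajectory simply does not enter the computation. Moreover, in the virtual (reduced) game the conditions $\Qest=0$ and $\lgsi_k\le 1$ are \emph{preserved} for all time by Lemma~\ref{lem:sequential_2v1resolve}, and Theorem~\ref{thm:perfect_defence} then guarantees every non-removed intruder is captured before reaching $\partial\target$ — independently of the removed intruders' behavior. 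A removed intruder reaching the perimeter contributes at most one point each, giving the bound; there is no coupling because each defender is consumed by a single capture and the capturing assignments are determined solely from the non-removed intruders.

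Finally I would remark that combining this with Theorem~\ref{thm:Qlow} yields the saddle point: $\Qest(\mf z_0)\le Q(\mf z_0;\controlset_D,\controlset_A^*)$ for all $\controlset_D$, and $Q(\mf z_0;\controlset_D^*,\controlset_A)\le \Qest(\mf z_0)$ for all $\controlset_A$, so $(\controlset_D^*,\controlset_A^*)$ is a saddle-point equilibrium with value $Q^*(\mf z_0)=\Qest(\mf z_0)$, under the stated numerical-advantage and configuration hypotheses. This is essentially bookkeeping once Theorem~\ref{thm:Qup} is in hand, but it is worth stating explicitly since it is the headline claim of Section~\ref{sec:saddle_point}.
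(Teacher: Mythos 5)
Your proposal is correct and follows essentially the same route as the paper's proof: invoke Lemma~\ref{lem:removal} to remove $\Qest(\mf z_0)$ intruders and obtain a virtual game with $\Qest=0$, apply Theorem~\ref{thm:perfect_defence} to guarantee capture of every remaining intruder, and conclude that only the ignored intruders can score. Your extra care in step~(iii) --- noting that the removed intruders never enter the assignment computations and hence cannot disrupt the defense of the virtual game --- is a worthwhile elaboration of a point the paper leaves implicit, but it is not a different argument.
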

\begin{proof}
From Lemma~\ref{lem:removal}, ignoring $\Qest(\mf z_0)$ intruders  generates a virtual game with $\Qest=0$.
All intruders in this virtual game are captured since the conditions for Theorem~\ref{thm:perfect_defence} are satisfied.
In the original game, at most $\Qest(\mf z_0)$ intruders that are ignored by the defenders can possibly score.
% The result immediately follows from Lemma~\ref{lem:removal} and Theorem~\ref{thm:perfect_defence}.
\end{proof}

Finally, the combination of the strategy $\mf \controlset_A^*$ from Theorem~\ref{thm:Qlow} and $\controlset_D^*$ from Theorem~\ref{thm:Qup} gives us the following result:
\begin{corollary}
If $\hat{q}_k(\mf z_0)\leq 1,\,\forall\,k$ after removing $\Qest(\mf z_0)$ intruders found by Algorithm~\ref{alg:removal}, then
\bql
Q(\controlset_D^*,\controlset_A)\leq  Q(\controlset_D^*,\controlset_A^*) \leq Q(\controlset_D,\controlset_A^*),
\eql
i.e., the strategy set $(\controlset_D^*,\controlset_A^*)$ gives a saddle-point equilibrium, and the value of the game is $Q^*(\mf z_0) = \Qest(\mf z_0)$.
\end{corollary}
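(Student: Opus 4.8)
The plan is to obtain the saddle-point inequalities directly by concatenating the two one-sided guarantees already established, with no new game-theoretic machinery. First I would record that Theorem~\ref{thm:Qlow} holds for \emph{every} initial configuration: the intruder team strategy $\controlset_A^*$ built from the optimal LGR partition $\djset^*$ satisfies $Q(\mf z_0;\controlset_D,\controlset_A^*)\geq \Qest(\mf z_0)$ for all $\controlset_D\in U_D$. Second, under the corollary's hypothesis that $\hat q_k(\mf z_0)\leq 1$ for all $k$ after the removal performed by Algorithm~\ref{alg:removal}, Theorem~\ref{thm:Qup} gives $Q(\mf z_0;\controlset_D^*,\controlset_A)\leq \Qest(\mf z_0)$ for all $\controlset_A\in U_A$. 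These are exactly the hypotheses in the corollary, so both bounds are simultaneously available.

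Next I would pin down the outcome at the candidate equilibrium pair. Evaluating Theorem~\ref{thm:Qlow} at $\controlset_D=\controlset_D^*$ yields $Q(\controlset_D^*,\controlset_A^*)\geq \Qest(\mf z_0)$, while evaluating Theorem~\ref{thm:Qup} at $\controlset_A=\controlset_A^*$ yields $Q(\controlset_D^*,\controlset_A^*)\leq \Qest(\mf z_0)$; together these force $Q(\controlset_D^*,\controlset_A^*)=\Qest(\mf z_0)$. Substituting this equality back into the two bounds gives, for all admissible $\controlset_A$ and $\controlset_D$,
\[
Q(\controlset_D^*,\controlset_A)\;\leq\;Q(\controlset_D^*,\controlset_A^*)\;=\;\Qest(\mf z_0)\;\leq\;Q(\controlset_D,\controlset_A^*),
\]
which is precisely the definition of a saddle-point equilibrium. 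In particular $\min_{\controlset_D}\max_{\controlset_A}Q=\max_{\controlset_A}\min_{\controlset_D}Q=Q^*(\mf z_0)=\Qest(\mf z_0)$, so the game has a value with no duality gap.

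I do not expect a genuine obstacle in this argument: it is a routine combination of Theorems~\ref{thm:Qlow} and~\ref{thm:Qup}. The only point needing care is bookkeeping on the hypotheses -- the intruder lower bound is unconditional, whereas the defender upper bound (and hence the equilibrium claim) is proved only for configurations with $\hat q_k\leq 1$ after removal, so the corollary must inherit exactly that restriction. It is also worth noting, though not strictly required, that outside this regime only $\max_{\controlset_A}\min_{\controlset_D}Q\geq\Qest$ remains guaranteed, so the equilibrium conclusion should be stated no more broadly than the upper-bound theorem permits.
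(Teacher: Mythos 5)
Your argument is correct and is exactly the route the paper takes: the corollary is obtained by combining the unconditional lower bound of Theorem~\ref{thm:Qlow} with the conditional upper bound of Theorem~\ref{thm:Qup}, evaluating both at the pair $(\controlset_D^*,\controlset_A^*)$ to pin down $Q(\controlset_D^*,\controlset_A^*)=\Qest(\mf z_0)$ and hence the saddle-point inequalities. Your added care about which hypothesis each bound requires matches the paper's statement of the corollary, so nothing further is needed.
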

If the intruders stick to $\controlset_A^*$, then the defender team cannot reduce the score by changing their strategy.
Similarly, if the defenders keep $\controlset_D^*$, then the intruders cannot increase the score by changing their strategy.
% This result indicates the optimality of the proposed strategy in the sense that any deviation does not improve the score as long as the opponent sticks to the equilibrium strategy.

% \input{defense_algorithm_originalstructure.tex}

\section{Simulation Results \label{sec:simulation}}
%This section presents numerical results to demonstrate the proposed defense policy.

% \daigo{Maybe cut some of the theory part to have more simulation results...?}

\subsection{Illustrative Example}
\noindent 
Figure~\ref{fig:sim_example} shows snapshots from a six vs. five scenario (one defender is out of the frame) with $\Qest=0$ and $\lgsi_k\leq 1,\forall\,k$.
% Figure: ex 
\begin{figure}[t]
\centering
\includegraphics[width=.48\textwidth]
{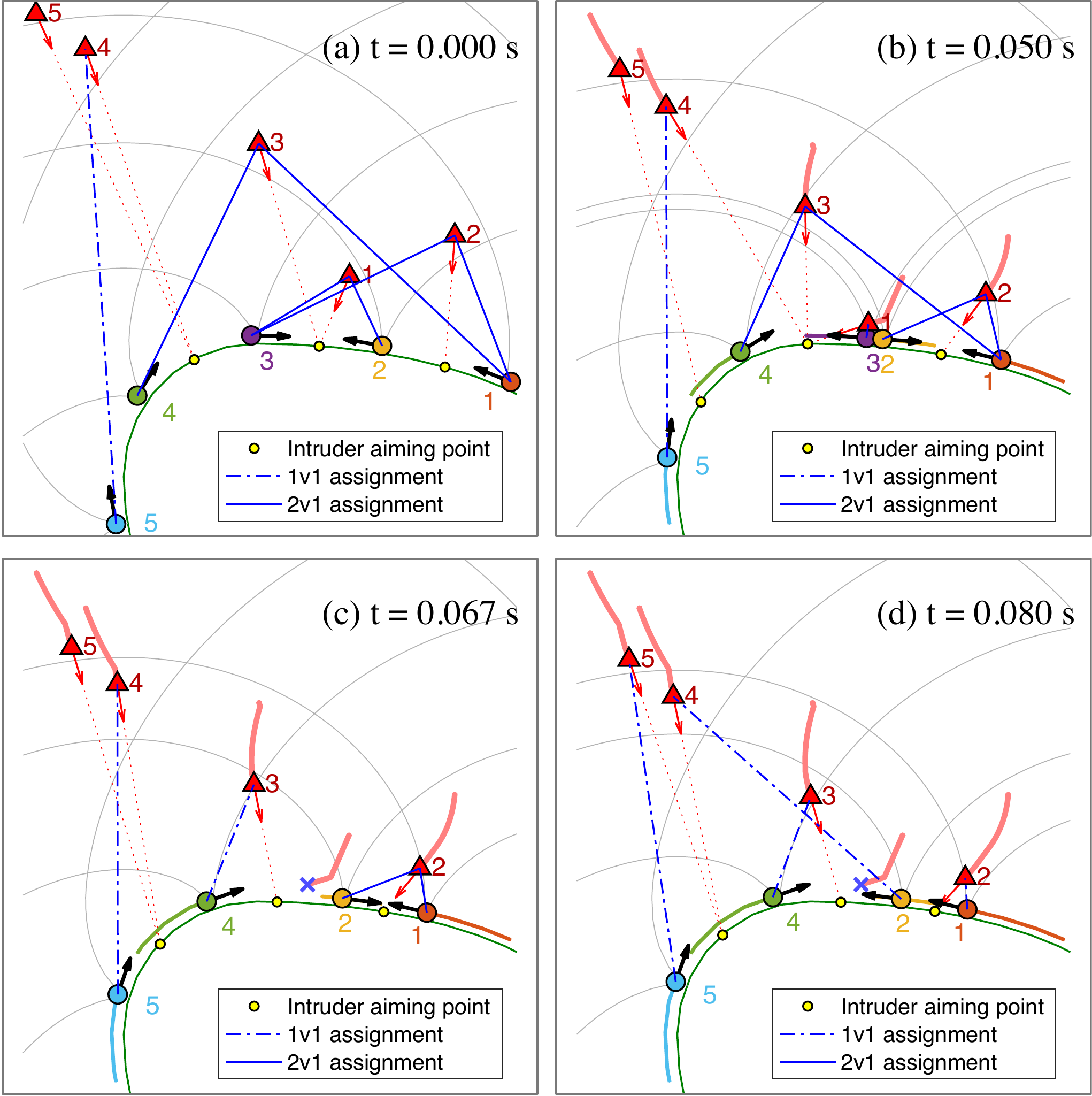}
\caption{Simulation snapshots.
% (Also see the supplementary video: \url{https://youtu.be/Czr8zgrRVII})
}
\label{fig:sim_example}
\vspace{-10pt}
\end{figure}
Since there is no region with $q_k>0$, no team is formed among the intruders, and each plays an individually-optimal behavior against all the defender robots.
Each snapshot highlights the moment when the \sequential\ reduces into a smaller structure.
From $t=0.050$~s to $0.067$~s, the transition is caused by $A_3$ moving into $\dwinr(D_4)$.
The pair $(D_1,D_2)$ has an implicit assignment to $A_5$.
At $t=80$, the \implicit\ structure is resolved, and now all the intruders have 1v1 assignments that guarantee perfect defense.

Note that the MM policy will treat $A_1$ as uncapturable and assign 1v1 defenses so that the other four defenders will be captured, resulting to $\Qmm=1$.
The only way, in this scenario, to guarantee capture of both $A_1$ and $A_2$ is for $D_2$ to perform two 2v1 defenses sequentially (with $D_3$ and then with $D_1$), and the LGR defense algorithm achieves this without any explicit look-ahead planning.

\subsection{Statistical results}
\noindent
This section presents the difference between the expected performance of MM-defense and LGR-defense, measured by the upper bounds $\Qmm$ and $\Qest$.
Since the scores are greatly affected by the initial configuration, we compute the average score from randomly generated initial configurations.
% To study the effect of initial polar radius $r_A$ (i.e., how close from $\target$ the intruders start the game), we randomly select $\theta_A \in [0,2\pi)$ for a fixed $r_A$ in each instance.
To study the effect of initial distance of the intruders to the perimeter, we randomly select the azimuthal positions of the intruders for each distance.
We test two cases for the defenders' initial configuration: (i)~uniformly spaced, and (ii) randomly placed on the perimeter.

% Figure~\ref{fig:statistics}a shows the average scores with varying $r_A$, obtained from the following parameters: $N_D=40$, $N_A=39$, and $\nu=1$.
%The colors represent the defense policies, and the line style indicates the defender placement (dashed lines for the random placement).
% Figure: statistics
\begin{figure}[t]
\centering
\includegraphics[width=.48\textwidth]
{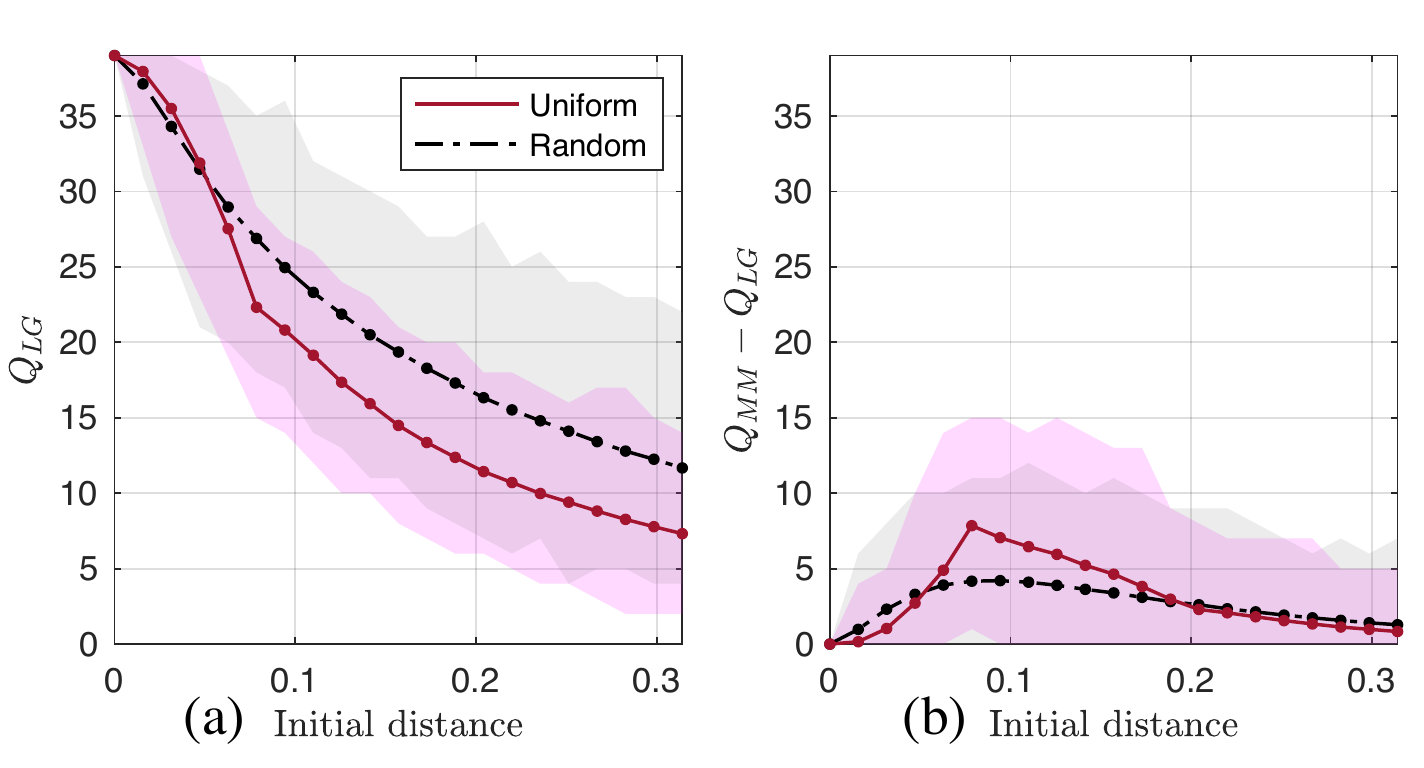}
\caption{
Score statistics with varying initial intruder distance, for uniformly (solid line) and randomly (dashed line) placed defenders.
% $N_D=40$, $N_A=39$, and $\nu=1$.
% (a) Average score for MM-assignment and proposed LGR-defense tested for two cases: uniformly spaced defenders (solid line), and randomly place defenders (dashed line).
(a) Score upper-bound $\Qest$.
The lines indicate the mean, and the shaded area is the envelope containing min and max.
(b) Suboptimality of the upper-bound $Q_{MM}$.
% Difference in the score for the two cases.
}
\label{fig:statistics}
\vspace{-10pt}
\end{figure}
Figure~\ref{fig:statistics}a shows how the scores change as a function of the initial distance.
The parameters are: $N_D=40$, $N_A=39$, $\nu=1$, and 10,000 initial configurations are tested for each distance.
The scores monotonically decrease as the intruders' starting distance increases and eventually converge to zero (not shown for clarity).
These values represent upper bounds on the number of intrusions which LGR policy can certify at the beginning of a game, and so might also be used to inform higher level coalition forming algorithms as in \cite{shishika2019cdc}.
% The convergence to zero occurs at $r_A=r_A^\text{max}(\nu)$, where $r_A^\text{max}(\nu)\triangleq \arg_{r}\left(V(r,\pi,\nu)=0\right)$, 

% Figure~\ref{fig:statistics}b shows the suboptimality of the bound $\Qmm$ provided by the maximum matching policy. The ratio satisfies $\Qmm/\Qest\geq 1$ because the LGR-defense always performs at least as goods as the MM policy.

Figure~\ref{fig:statistics}b shows the suboptimality of the bound $\Qmm$ provided by the MM policy. The difference satisfies $\Qmm-\Qest\geq 0$ because the LGR defense always performs at least as good as the MM policy.
The difference is small at short distance because almost all intruders can trivially score.
The difference increases with more opportunities for the cooperative 2v1 defenses, and it reduces again at larger distance
% The peak in the difference occurs around the distance where intruders fall into the paired-defense regions frequently.
% Since this distance is a function of defender separation, the peak will shift if $N_D$ is varied.
% The difference reduces at larger distance
since the easier configuration allows the MM to perform near optimal: i.e., defenders do not need to employ cooperative strategy to capture intruders.
% and in fact goes to zero, because any intruder can be captured by any defender (the bipartite graph is complete) beyond certain distance, regardless of their azimuthal placement.
% more intruders are far enogh the defenders under LGR-defense policy are performing independent defense more often.

The shaded envelope containing the maximum difference highlights the benefit of using the LGR defense policy.
Consider, for example, the situation in which the intruders are detected at the distance $0.1$ units from the perimeter, and suppose they can select the initial azimuthal positions.
If the intruders select a formation possibly with the knowledge that the defenders employ MM strategy, the gap in the score can be as large as $\Qmm-\Qest\approx 15$, which is significant given the score $\Qest \approx 20 \pm 7$ at this distance. 

The statistical results on the computation time are  provided in \Appendix.

% \jimmy{This is the key plot showing the empirical value of the method; how can we make the figure more clear/impactful? My intuition is the fact that the mean performance of random defenders outperforms the mean value of uniform defenders at some radii must be an artifact of insufficient sampling, and with more sampling this will go away and the result will be more interpretable. That might be wrong. I do think sampling more radii will make the plot look better, especially near the peak.}

% \jimmy{I'm curious if comparing the 'worse case' sampled scenario instead of the 'mean' sampled scenario will show a larger difference between the MM and LG performance. This is directly applicable to a 'game' where the intruders get to choose the initial condition.}

% \jimmy{What meaningful commentary can we add about how to interpret this result? }

%=========================================%
\section{Conclusion \label{sec:conclusion}}
\noindent
This paper presents a cooperative defense strategy for a variant of the multi-player reach-avoid game.
The proposed LGR defense policy runs in polynomial time and outperforms the existing assignment-based policy.
In fact, LGR defense performs optimally under certain conditions, where the performance is measured by the number of intruders that reaches the target.
The improved performance compared to the maximum-matching formulation is due to the incorporation of cooperative defense, in which a pair of defenders pursue an intruder from both sides.
The low computational complexity compared to a maximum-independent-set formulation is achieved by the proposed decomposition method into local games.
% The pairing and also its switching are performed reactively in a decentralized manner, while the  only centralized aspect in the defense policy is when a set of (potentially uncapturable) intruders is selected to be ignored.

There are various aspects to be addressed in the future work, including the extension to three dimensions, environments with obstacles, distributed algorithm with relaxed assumptions on the information structure (e.g., local sensing and communication), and cooperative defense with heterogeneous teams.
% \vspace{-10}

%%=========================================%
%\section*{APPENDIX}
%
%Calculating the exact score \(\Qest\) and teaming \(\djset^*\) given all local game scores \(q_k\) is subtle because the intruder subteams \(S_A^k\) may conflict with each other.
%Consider a graph with vertices representing the \(N_D^2\) LGRs.
%The weight of the \(k\)'th vertex is the associated local game score, \(q_k\).
%Edges between vertices represent conflicting intruder subteams; the edge \((i,j)\) exists if \(S_A^i \cap S_A^j \neq \emptyset\).
%The optimal teaming and score is obtained from the maximum weight independent set on this graph: the set of vertices whose sum of weights is maximal subject to the constraint that no two vertices are joined by an edge.
%Fortunately, the graph structure is that of a circular arc graph: each vertex (LGR) can be associated with the perimeter arc between its boundary defenders, and two vertices are in conflict iff their arcs intersect.
%The maximum weight independent set of a circular arc graphs with n vertices can be solved in \(o(n^2)\) time.
%This provides an \(o(N_D^4)\) complexity algorithm for computing the intruders' guaranteed total score \(Q_{LG}\) and optimal teaming \(G^*\).

%%%%%%%%%%%%%%%%%%%%%%%%%%%%%%%%%%%%%%%%%%%%%%%%%%%%%%%%%%%%%%%%%%%%%%%%%%%%%%%%%
\section*{ACKNOWLEDGMENT}
We gratefully acknowledge useful discussions with Chris Kroninger at ARL and Ken Hayashima.
% \vspace{-10pt}
%\bibliographystyle{IEEEtran}
%\bibliography{IEEEabrv,CDC2018}

\if\RALversion0
% \newpage
\section*{APPENDIX}
\subsection{Proof of Lemma~\ref{lem:sequential_2v1resolve}}\label{sec:proof_2v1resolve}
From Lemma~\ref{lem:sequential_2v1noconflict}, there are no conflicting 2v1 assignments.
Therefore, looking at each 2v1 game separately, the defender pair performs optimally against the assigned intruder.
Recalling the 2v1 analysis, the intruder in $\dwinrp$ will eventually move into the defender winning region of either of the two boundary defenders (Prop.~\ref{prop:pincer}).
Consider the first transition that occurs among all 2v1 games,\footnote{The case in which multiple transitions occur simultaneously can also be treated similarly.} and let $k_1$ be the index of the associated I-LGR.
Clearly, $\lgsi_{k_1}$ changes from 1 to 0 with this transition.

Let us examine how $\lgsi$ changes for other I-LGRs.
Observe that the transition occurred by the intruder leaving $\awinri^{k_1}$ (see Fig.~\ref{fig:2v1barrier}).
Also observe that any I-LGR that contains this intruder immediately after the transition had already contained it before the transition.
From the above two observations, it is clear that $\lgsi_k$ is non-increasing for all $k$.

Let us also examine if $q_k$ can possibly become positive for any $k$ with the transition.
For any $q_k=0$ to increase to a positive number, either of the following two needs to occur: (i) an intruder inside $\awinri^k$ enters $\awinrc^k$, or (ii) an intruder outside $\awinri^k$ enters $\awinrc^k$.
The first case is equivalent to an intruder moving from $\dwinrp^k$ to $\awinrc^k$, where $\lgsi_k = 1$.
This cannot occur due to Proposition~2.
The second case entails an intruder to enter $\awinri^k$ because of the relation $\awinrc^k\subset\awinri^k$.
However, this cannot occur either from the discussion in the previous paragraph.
Therefore, $q_k\leq 0,\forall\, k\Leftrightarrow \Qest=0$ will be preserved with the transition.

We have shown that there are one less I-LGR with $\lgsi=1$, implying that there are one less 2v1 assignments, and the condition $\Qest=0$ is still true after the transition.
% Combined with the results from Lemma~\ref{lem:completeness}, we can conclude that there are one less 2v1 assignments and additional 1v1 and/or implicit assignments.
Since the conditions for no conflict (Lemma~\ref{lem:sequential_2v1noconflict}) are still true, the next transition that reduces the number of 2v1 games eventually occurs.
% Every time such transition occurs, the number of interrelated 2v1 game decreases, and we have additional 1v1 and/or implicit assignment configuration.
% and not by entering a new I-LGR, which implies that the local game scores $q$ and $\lgsi$ are both non-increasing.
% Therefore, the condition $\Qest=0$ and $\lgsi_k\leq1$ will be preserved after the transition.
% Since the conditions for no conflict (Lemma~\ref{lem:sequential_2v1noconflict}) are still true, the next transition that reduces the number of 2v1 games occurs in finite time.
By repeating this process, we see that the number of 2v1 assignments monotonically decreases as the game progresses.
% By repeating this process, we see that the \sequential\ structure will separate into multiple 1v1 games and non-overlapping 2v1 games possibly having a implicit assignment structure.
% With the result from Lemma~\ref{lem:unassigned_defense1}, the proof is completed.

\subsection{Proof of Lemma~\ref{lem:removal}}\label{sec:proof_removal}
We show by contradiction that there exists at least one intruder whose removal reduces $\Qest$ by one, when $\Qest>0$.
The contradiction will be derived by supposing that there is no such intruder, and showing that it leads to the existence of $\djset$ with $\Qint(\djset)>\Qest$, which contradicts the optimality of $\Qest$ in \eqref{eq:Qest}.

Suppose there is no intruder whose removal reduces $\Qest$.
% Then we can show that there exists a selection of disjoint sets $G'$ such that $Q(G')>Q(G^*)=\Qest$ (details omitted for page constraints), which contradicts with the optimality of $\Qest$ and $G^*$.
Then there exists at least two sets $\djset_1^*$ and $\djset_2^*$ that gives the optimal score $\Qint(\djset_1^*) = \Qint(\djset_2^*) = \Qest$, with the constraint $\subteam_A^k\cap\subteam_A^l=\emptyset$ for $k\in\djset_1^*$ and $l\in\djset_2^*$, i.e., no common intruders.
The two sets do not share any intruder because if they did, then the removal of the common intruder will reduce both $\Qint(\djset_1^*)$ and $\Qint(\djset_2^*)$.

Now, take $k\in\djset_1^*$ with $q_k>0$ and consider merging it to $\djset_2^*$.
If $\awinrc^k\cap\awinrc^l=\emptyset$ for $\forall\, l\in\djset_2^*$, then $\djset_3 = \{k,\djset_2^*\}$ is a valid disjoint set and it has $\Qint(\djset_3)>\Qint(\djset_2^*)$, which contradicts  the optimality of $\djset_2^*$.
If $k$ intersects with $l\in\djset_2^*$, then let $m$ be the smallest LGR that contains both $k$ and $l$.
Since the regions intersect, we have $n_D^m<n_D^k+n_D^l$.
However, since they share no intruders, we have $n_A^m\geq n_A^k+n_A^l$.
Therefore, $q_m= n_A^m-n_D^m > n_A^k+n_A^l- n_D^k-n_D^l = q_k+q_l$, which again contradicts the optimality of $\djset_2^*$.
The case when $k$ intersects with multiple LGRs in $\djset_2^*$ can be treated similarly.

After removing one intruder, the same argument still holds if $\Qest$ is positive.
Repeating this process, the score $\Qest$ reduces to zero after removing $n$ intruders.
Also, observe that the order of the removal does not affect the result.

\subsection{Computation Time} \label{sec:computation_time}
Figure~\ref{fig:statistics_comp} shows how the scores and the computation time change with the number of intruders, $N_A$.
The number of defenders are $N_D=N_A+1$.
For each $N_A$, we tested 100 configurations.
The configurations are generated similarly to Sec.~\ref{sec:simulation}, where the initial distance is set to be one half of the defender spacing: i.e., $r_A= \frac{\pi}{N_D}$.

\begin{figure}[h]
\centering
\includegraphics[width=.48\textwidth]
{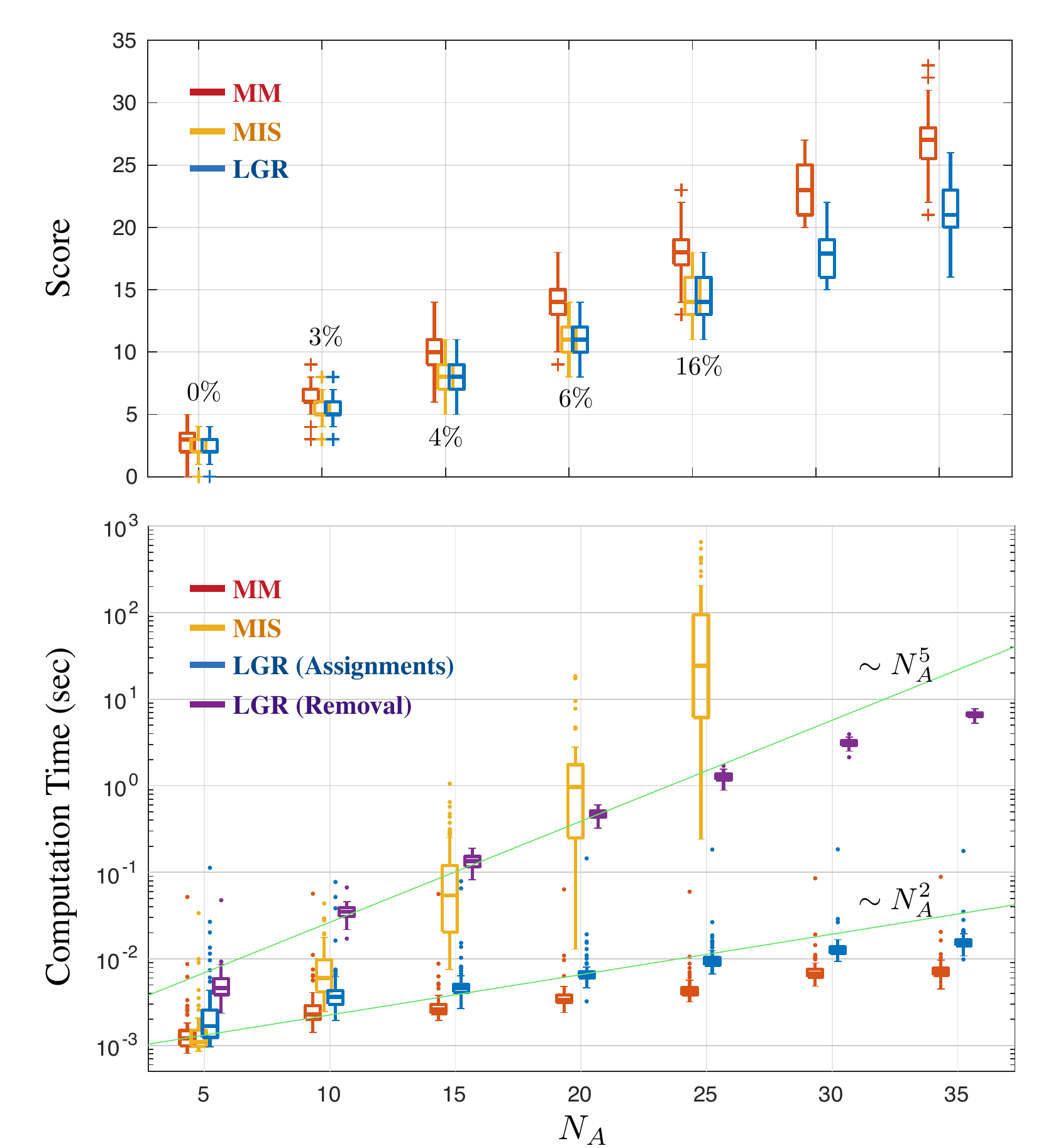}
\caption{Top figure shows the score bounds provided by three different policies.
The accompanied percentages indicate the fraction of trials in which $Q_\text{MIS}>Q_\text{LG}$.
The bottom figure shows the computation time in $\log$ scale.
The trend lines that scale with $N_A^2$ and $N_A^5$ are shown in green.
}
\label{fig:statistics_comp}
\end{figure}

The top figure shows that the MM policy suffers from suboptimal capture guarantees.
While the MIS policy frequently provides the optimal score $\Qest$, it still cannot always provide this bound as shown by the fraction of trials in which the score was $Q_\text{MIS}>Q_\text{LG}$. 

The bottom figure shows the computation time for the algorithms implemented in MATLAB, run on a laptop with a Core i7-7820HQ processor with 16 GB of memory.
% The MM policy runs Hungarian algorithm that.

The LGR-defense policy is considered into two parts: the removal of uncapturable intruders (Algorithm~2) which needs to be run only once at the beginning of the game, and the assignments (Algorithms~3 and 4) that needs to be run continuously throughout the game.
In Sec.~\ref{sec:saddle_point}, we showed that Algorithm~\ref{alg:removal} has the complexity $O(N_AN_D^4)$, which is equivalent to $O(N_A^5)$ in this experiment.
The comparison of the purple data points ``LGR (Removal)'' with the trend line $\sim N_A^5$ validates the above theoretical bound.

We showed in Sec.~\ref{sec:defense_policy} that Algorithms~\ref{alg:2v1assignment}
and \ref{alg:1v1assignment} have complexities $O(N_D^2N_A^2)$ and $O((N_A+N_D)^{2.5})$, equivalent to $O(N_A^4)$ and $O(N_A^{2.5})$ in this example.
The comparison of the blue data points ``LGR (Assignments)'' with the trend line $\sim N_A^2$ shows that the empirical growth in computation time is sub-quadratic.
% combination of the two algorithms is actually subquadratic.

The MM policy that has the bound $O(N_A^{2.5})$ also shows sub-quadratic complexity.
The MIS policy performs exhaustive search for the maximum independent set.
It is clear that the computation time is super-polynomial.
The result is shown only up to $N_A=25$, since for $N_A\geq 30$ there were examples where the search could not be completed.

This statistical result supports our claim that the LGR defense policy outperforms MM and MIS policy while maintaining scalability (i.e., polynomial time algorithm). 

\fi

\end{document}